\newtheorem{theorem}{Theorem}
\newtheorem{proposition}[theorem]{Proposition}
\newtheorem{lemma}[theorem]{Lemma}
\begin{document}

\title{Groupoids and the tomographic picture of quantum mechanics}
\author{A. Ibort$^a$, V.I. Man'ko$^b$, G. Marmo$^{c,d}$, A. Simoni$^{c,d}$,
C. Stornaiolo$^d$, \and and F. Ventriglia$^{c,d}$ \\
{\footnotesize \textit{$^a$Departamento de Matem\'{a}ticas, Universidad
Carlos III de Madrid, }}\\
{\footnotesize \textit{Av.da de la Universidad 30, 28911 Legan\'{e}s,
Madrid, Spain }}\\
{\footnotesize {(e-mail: \texttt{albertoi@math.uc3m.es})}}\\
{\footnotesize \textit{$^b$P.N.Lebedev Physical Institute, Leninskii
Prospect 53, Moscow 119991, Russia}}\\
{\footnotesize {(e-mail: \texttt{manko@na.infn.it})}}\\
\textsl{{\footnotesize {$^c$Dipartimento di Fisica dell' Universit\`{a}
``Federico II" di Napoli,}}}\\
\textsl{{\footnotesize {Complesso Universitario di Monte S. Angelo, via
Cintia, 80126 Naples, Italy}}}\\
\textsl{{\footnotesize {$^d$ Sezione INFN di Napoli,}}}\\
\textsl{{\footnotesize {Complesso Universitario di Monte S. Angelo, via
Cintia, 80126 Naples, Italy}}}\\
{\footnotesize {(e-mail: \texttt{marmo@na.infn.it, simoni@na.infn.it,
cosmo@na.infn.it, ventriglia@na.infn.it})}}}
\maketitle

\abstract{ The existing relation between the tomographic description of quantum
states and the convolution algebra of certain discrete groupoids
represented on Hilbert spaces will be discussed. The realizations of groupoid algebras  based on qudit,
photon-number (Fock) states and symplectic tomography quantizers and
dequantizers will be constructed. Conditions for identifying the convolution product of groupoid functions and the star--product arising from a quantization--dequantization scheme will be given.   A tomographic approach to construct quasi--distributions out of suitable immersions of groupoids into Hilbert spaces will be formulated and, finally, intertwining kernels for such generalized
symplectic tomograms will be evaluated explicitly.}

\section{Introduction}

Symmetries of quantum and classical systems play an important role in
studying their properties \cite{Beppe,3gbook}. Symmetries are usually
associated with \textquotedblleft symmetry groups\textquotedblright\ and in
the case of continuous variables with Lie groups. Other mathematical
structures very close to groups, like semigroups and groupoids, have been
found to be relevant to describe other properties of a physical system.
Semigroups are very important to describe open quantum systems. As for
groupoids, we may quote for instance from the section \textquotedblleft
Algebra of Physical Quantities\textquotedblright\ in the book on
Noncommutative Geometry \cite{Connes} by Alain Connes:

``The set of frequencies emitted by an atom does not form a group, and it is
false that the sum of two frequencies of the spectrum is again one. What
experiments dictate is the Ritz--Rydberg combination principle which permits
indexing the spectral lines by the set $\Delta $ of all pairs $(i,j)$ of
elements of a set $I$\ of indices. The frequencies $\nu _{(ij)}$\ and \ $\nu
_{(kl)}$ only combine when $j=k$ to yield $\nu _{(il)}=\nu _{(ij)}+\nu
_{(jl)}$ (...). Due to the Ritz--Rydberg combination principle, one is not
dealing with a group of frequencies but rather with a groupoid $\Delta
=\left\{ (i,j);i,j\in I\right\} $ having the composition rule $%
(i,j)(j,k)=(i,k)$. The convolution algebra still has a meaning when one
passes from a group to a groupoid, and the convolution algebra of the
groupoid $\Delta $ is none other that the algebra of matrices since the
convolution product may be written $(ab)_{(i,k)}=\sum_{n}a_{(i,n)}b_{(n,k)}$
which is identical with the product rule of matrices."

Thus a key role for the introduction of groupoids in the description of
quantum systems is played by the combination principle of frequencies which
induces by means of a Fourier expansion, say $a=\sum_{m,n}a_{mn}\exp (%
\mathrm{i}\nu _{mn}t)$ and $b=\sum_{k,j}b_{kj}\exp (\mathrm{i} \nu _{kj}t),$
whose product $ab$ gives the matrix multiplication rule $(ab)_{ik}=%
\sum_{n}a_{in}b_{nk}$ which is just convolution in the groupoid algebra.
Then, even if Heisenberg's matrix mechanics, and its foundations, has only a
historical interest today \cite{VDWaerden}, it is nevertheless worthy to
discuss the role that groupoids play in the foundations of quantum mechanics
because it goes far beyond that of groups. We would like to mention that the
use of groupoids in quantum mechanics has been advocated not only by Connes
but by Accardi \cite{Accardi} too.

Recently, we have considered quantum tomography as providing an alternative
picture of quantum mechanics \cite{Ibort I,Ibort II}. Specifically we have
considered a $C^{\star }-$algebraic approach to quantum tomography where
natural instances of $C^\star-$algebras are provided by group algebras and
the star--product is nothing but the convolution product on the group
itself. As the convolution product makes sense also on groupoids and, as
explained before, the Heisenberg picture originates directly from a groupoid
as a carrier space along with the Ritz--Rydberg combination principle, it is
quite natural to try to compare the tomographic picture with the Heisenberg
one by using the common ground of groupoids.

In quantum mechanics, groups are used practically in all branches of
research. Also, semigroups are important for considering evolution of open
quantum systems undergoing dissipation or decoherence. The groupoid
structure however is not so well known in the physicists community, even
though there are authoritative examples of its use as we have already
stressed. Thus it seems appropriate from a pedagogical point of view, to
compare these three different structures: groups, semigroups and groupoids.
A binary composition is defined for all elements of a group and a semigroup,
while for a groupoid such composition is defined only for particular pairs
of elements and is associative when defined. Moreover, any element of a
group has an inverse element along with the unit element. In a groupoid
there are many different unit elements and any element has its own inverse
which may be composed from the left and from the right, yielding different
units. On the contrary for a semigroup the unit element is unique, but in
general no inverse element exists. Precise definitions and properties of
groupoids will be discussed in section 2 and in Appendix A.

It is the aim of this paper to review the general ingredients for the
existence of a star--product on a measure space and to show that, once
groupoids are realized in simple terms in Hilbert spaces and Fock spaces,
the groupoid algebra convolution coincides with the star--product whenever
\textquotedblleft quantizer-dequantizer\textquotedblright maps are properly
defined. This property is the counterpart for groupoids of what happens for
groups, there the star--product coincides with the group-algebra convolution
when quantizer and dequantizer are given by unitary representations of the
group. Then such construction will be used to construct quantum tomographies
based on groupoid algebras. A more elaborated comparison with tomograms for
quantum field theories will be dealt with in future work.

The paper is organized in the following manner. In sec. 2 we begin with a
short discussion of a simple example of groupoid which will be relevant in
the whole paper. We will define its groupoid algebra and its convolution
product. In sec. 3 we explain the essential ingredients to construct
star--products in the quantization--dequantization scheme. In sec. 4 we
compare the two kinds of products present on the groupoid algebra functions,
i.e. the convolution product and star--products. We will state a proposition
giving the conditions for their coincidence. In sec. 5 we discuss some
examples of Hilbert space realizations of groupoids of physical interest
satisfying the previous conditions, and a counterexample will also be
provided. In sec. 6 contact is made between certain quasi--distributions
arising from realizations of groupoids in the tomographic approach, and
well--known tomographic schemes as spin, photon number and symplectic ones.
Finally, some conclusions and perspectives are drawn in sec. 7. In the
mathematical Appendix more details and example of groupoids are provided,
while the proposition of sec. 4 is slightly generalized, at least for finite
groupoids.

\section{A friendly introduction to Groupoids}

We will start discussing the pair--groupoid of a set, a relevant example for
the physical situations considered afterwards and arguably the simplest
example of a groupoid, leaving a more rigorous and complete treatment of
groupoids to the Appendix. In following this example, it is convenient to
bear in mind the Ritz--Rydberg combination principle.

Given a set $S$ consider the cartesian product
\begin{equation}
\Gamma =S\times S=\left\{ \left( x,y\right) ,x,y\in S\right\} .
\end{equation}%
$\Gamma $ may be given the structure of a groupoid as follows. We define
first the set $G_{0}$ of ``units" of $\Gamma $ as%
\begin{equation}
G_{0}=\left\{ (x,x)\right\} _{x\in S}\subseteq \Gamma
\end{equation}%
and two maps, $r$ and $s,$ from $\Gamma $ onto $G_{0}$
\begin{equation}
r,s:\Gamma \rightarrow G_{0},\quad r(x,y)=(x,x),\quad s(x,y)=(y,y)
\end{equation}%
The composition law (product) $\circ $ of two elements $(x,t),(z,y)$ of $%
\Gamma $ is defined if and only if $t=z,$ i.e. $s(x,t)=r(z,y),$ and reads%
\begin{equation}
(x,t)\circ (t,y)=(x,y).
\end{equation}%
The Ritz--Rydberg frequency indices obey the same composition law. The
product $\circ $ is associative, as one can readily check and $r(x,y)$, $%
s(x,y)$ are the left and right unity for $(x,y)$ respectively.

Finally, every element $(x,y)$ has a (left and right) inverse given by%
\begin{equation}
(x,y)^{-1}=(y,x)
\end{equation}%
and one has%
\begin{equation}
(x,y)\circ (y,x)=(x,x)\quad ,\quad (y,x)\circ (x,y)=(y,y).
\end{equation}%
Moreover%
\begin{equation}
(y,x)\circ \left( (x,y)\circ \left( y,z\right) \right) =\left( y,z\right)
\quad ,\quad \left( \left( z,x\right) \circ \left( x,y\right) \right) \circ
(y,x)=\left( z,x\right) .
\end{equation}

Consider the map%
\begin{equation}
\left( r,s\right) :\Gamma \rightarrow G_{0}\times G_{0},\quad \left(
x,y\right) \mapsto \left( \left( x,x\right) ,\left( y,y\right) \right) .
\end{equation}%
This map is both one--to--one and onto, so the groupoid $\Gamma $\ is said
to be principal and transitive, respectively.

A function on $\Gamma $ is a (real or complex) function $f(x,y)$. When $S$
is a finite or a countable set, say $\{x_{i}\}$ with $i,k=1,..,N\leq \infty$%
, a function $f(x_{i},x_{k})$ may be thought of as a a finite or an infinite
matrix. By thinking of $f(x_{i},x_{k})$ in these terms makes available the row--by--column
product. It is now easy to see that by using an irreducible groupoid realization,
the matrices we associate with any function constitute a realization of the groupoid algebra.
Let us expand this statement.

We use an irreducible groupoid realization:
the element $(x_{i},x_{k})$ is associated with the $N\times N-$matrix $%
E_{ik},$ whose entries are zero, but the entry $ik$ which is one:%
\begin{equation}
\left( E_{ik}\right) _{jl}=\delta _{ij}\delta _{kl}.
\end{equation}%
Moreover,%
\begin{equation}
E_{ik}E_{jl}=\delta _{kj}E_{il}  \label{e prod}
\end{equation}%
so that if $(x_{i},x_{k})$ and $(x_{j},x_{l})$ are not composable the
product of their representative matrices is zero. This generalizes a group
realization to the groupoid case. The transpose $E_{ik}^{T}$ of $E_{ik}$ is $%
E_{ki},$ associated with $(x_{k},x_{i}),$ the inverse of $(x_{i},x_{k})$.
The matrices $E^{\prime }$s are an orthonormal basis of the corresponding
Hilbert--Schmidt space, as
\begin{equation}
\mathrm{Tr}\left[ E_{ik}E_{lj}^{T}\right] =\delta _{il}\delta _{kj}.
\label{orth}
\end{equation}%
In the present case, they are just the \textquotedblleft units" defined by
Weyl \cite{weyl} long ago, while introducing the concept of group algebra
and its realizations. They provide an irreducible realization of the
groupoid algebra functions in terms of $N\times N-$matrices:%
\begin{equation}
f\mapsto A_{f},\quad A_{f}:=\sum_{i,k=1}^{N}f(x_{i},x_{k})E_{ik}
\label{real group alg}
\end{equation}

We also notice, in passing, that these aspects contain also the basic
ingredients of the Schwinger's measurements algebra \cite{Schwinger}. The
groupoid algebra can be defined, analogously to Weyl's definition of a group
algebra, as the algebra of $\ N^{2}-\dim $ vectors, associated with the
functions on a groupoid. So, the vector $F,$ associated with the function $f$%
, has component $f(x_{i},x_{k})$ in the \textquotedblleft direction" $%
(x_{i},x_{k}).$ Such a vector can be expressed as a formal linear
combination:%
\begin{equation}
F=\sum_{i,k=1}^{N}\left[ f(x_{i},x_{k})\right] (x_{i},x_{k}),
\end{equation}%
so that the groupoid realization in terms of units $(x_{i},x_{k})\mapsto
E_{ik\text{ }}$yields at once the irreducible realization of the groupoid
algebra of eq. \eqref{real group alg}. The groupoid algebra product can be
defined as
\begin{equation}
F_{1}F_{2}=\sum\limits_{\overset{i,j,k,l=1}{\mathrm{allowed~}%
(x_{i},x_{j})\circ (x_{k},x_{l})}}^{N}\left[
f_{1}(x_{i},x_{j})f_{2}(x_{k},x_{l})\right] (x_{i},x_{j})\circ (x_{k},x_{l})
\end{equation}%
As a consequence, the related algebra of groupoid functions is equipped with
a convolution product, which reads
\begin{equation}
\left( f_{1}\ast f_{2}\right) (x_{n},x_{m})=\sum\limits_{\overset{i,j,k,l=1}{%
(x_{i},x_{j})\circ (x_{k},x_{l})=(x_{n},x_{m})}%
}^{N}f_{1}(x_{i},x_{j})f_{2}(x_{k},x_{l}).
\end{equation}%
Taking into account that $(x_{i},x_{j})\circ (x_{k},x_{l})$ is defined only
if $j=k$ and\ $(x_{i},x_{k})\circ (x_{k},x_{l})=(x_{n},x_{m})$ gives $i,l=m,$
these constraints may be implemented in terms of Kronecker delta symbols,
and the convolution formula becomes:
\begin{eqnarray}
\left( f_{1}\ast f_{2}\right) (x_{n},x_{m})
&=&\sum\limits_{i,j,k,l=1}^{N}f_{1}(x_{i},x_{j})f_{2}(x_{k},x_{l})\delta
_{in}\delta _{jk}\delta _{lm}  \label{conv} \\
&=&\sum\limits_{k=1}^{N}f_{1}(x_{n},x_{k})f_{2}(x_{k},x_{m}).  \notag
\end{eqnarray}

In terms of the representative matrices $A^{\prime }$s, the above
convolution formula is realized as the usual\ row--by--column matrix
product. In view of eq. \eqref{e prod}, we get%
\begin{eqnarray}
A_{f_{1}}A_{f_{2}}
&=&\sum_{i,k,j,l=1}^{N}f_{1}(x_{i},x_{k})f_{2}(x_{j},x_{l})E_{ik}E_{jl}=%
\sum_{i,k,l=1}^{N}f_{1}(x_{i},x_{k})f_{2}(x_{k},x_{l})E_{il} \\
&=&\sum_{i,l=1}^{N}\left( f_{1}\ast f_{2}\right)
(x_{i},x_{l})E_{il}=A_{f_{1}\ast f_{2}}.  \notag
\end{eqnarray}

In conclusion, as for groups, the groupoid structure yields the possibility
of introducing an algebra structure in the set $\mathcal{F}\left( \Gamma
\right) $ of the groupoid functions. Besides, for this kind of principal
groupoids, functions are associated with numerical matrices, the groupoid
algebra can be realized as the algebra of $N\times N-$matrices and
convolution is the usual matrix product.

Finally, when the groupoid is not countable, the convolution formula %
\eqref{conv} entails an integration over the groupoid with respect to a
suitable measure. For instance, in the continuous case $S=\mathbb{R}$, it is
natural to generalize matrix multiplication to a \textquotedblleft
continuous matrix multiplication \textquotedblright
\begin{equation}
\left( f_{1}\ast f_{2}\right) (x,y)=\int_{\mathbb{R}}f_{1}(x,t)f_{2}(t,y)%
\mathrm{d}t  \label{contmat}
\end{equation}%
where $\mathrm{d}t$\ is the translation--invariant Lebesgue measure and $%
f_{1},f_{2}$ are continuous functions of proper support \cite{weinstein}, we
postpone further details to the Appendix C.

\section{A general framework for a $\star -$product}

Given a vector space $V$ (for the moment finite-dimensional) and its dual $%
V^{\ast },$ we can consider immersions of a measure space $(X,\mathrm{d}x)$
into $V$ and $V^{\ast },$ say $x\mapsto v\left( x\right) $ and $x\mapsto
\alpha \left( x\right) ,$ satisfying the relation (which is written using
different notation):%
\begin{equation}
\mathbb{I}_{V} = \int_{X} v\left( x\right) \mathrm{d}x \, \alpha \left(
x\right) \equiv \int_{X}v_{x}\, \mathrm{d}x \, \alpha _{x}\equiv
\int_{X}v_{x}\otimes \alpha _{x} \, \mathrm{d}x,  \label{decid}
\end{equation}%
i.e., we can construct a resolution of the identity $\mathbb{I}_{V}\in
\mathrm{End}\left( V\right) $ with the property%
\begin{equation*}
\alpha _{y}=\int_{X}\alpha _{y}\left( v_{x}\right) \alpha _{x}\mathrm{d}x.
\end{equation*}

Once the previous immersions are given, it is possible to associate with any
vector $w\in V$ a function $f_{w}$ on $X$ as%
\begin{equation}
f_{w}\left( x\right) :=\alpha _{x}\left( w\right)  \label{deq}
\end{equation}%
With any function $f$ in the image of this map we may associate a vector $%
w_{f}$ in $V$ by setting%
\begin{equation}
w_{f}:=\int_{X}f\left( x\right) v_{x}\mathrm{d}x.  \label{qu}
\end{equation}

Assuming, for simplicity, that any function on $X$ is in the image of the
map $w\mapsto f_{w}$, it is possible to endow the space of functions on $X$
with all additional structures we may have on $V$. For instance, if $V$ is a
Lie algebra with Lie bracket $\left[ \cdot ,\cdot \right] $ we may define%
\begin{equation}
\left[ f_{v_{1}},f_{v_{2}}\right] :=f_{\left[ v_{1},v_{2}\right] }.
\end{equation}%
Similarly, if $V$ is an associative algebra with product $B \colon V\times
V\rightarrow V$, $v\cdot u = B \left( v, u\right) \in V $ and $v_i$ is a
linear basis, then:

\begin{equation}
v_{j}\cdot v_{k} = \sum_{l}c_{jk}^{l}v_{l},
\end{equation}%
and we may define%
\begin{equation}
f_{v_{j}}\star f_{v_{k}}:=f_{v_{j}\cdot v_{k}}=f_{B\left( v_{j},v_{k}\right)
}.
\end{equation}

This abstract general framework has been extensively analyzed in previous
work \cite{quant-dequant} under the evocative name of quantizer--dequantizer
for the algebra of operators on some Hilbert space $\mathcal{H}$, i.e., to
be more precise, we may consider the Hilbert space $V$ of Hilbert-Schmidt
operators on $\mathcal{H}$ and, after identifying $V$ and $V^{\ast }$ in a
natural way, the immersions $v(x)$ and $\alpha (x)$ above will be written as
$\hat{D}(x)$ and $\hat{U}(x)$ respectively. We recall that in such a case,
the dequantizer $\hat{U}\left( x\right) $ and quantizer $\hat{D}\left(
x\right) $ give rise to formulae%
\begin{equation}
f_{A}\left( x\right) =\mathrm{Tr}\left[ \hat{A}\hat{U}^{\dagger }\left(
x\right) \right]  \label{dequantU}
\end{equation}%
instead of eq. \eqref{deq} and%
\begin{equation}
\hat{A}=\int_{X}f_{A}\left( x\right) \hat{D}\left( x\right) \mathrm{d}x
\end{equation}%
instead of eq. \eqref{qu}, while eq. \eqref{decid} becomes%
\begin{equation}
\mathrm{Tr}\left[ \hat{D}\left( x\right) \hat{U}^{\dagger }\left( x^{\prime
}\right) \right] =\delta _{x}\left( x^{\prime }\right) .
\end{equation}%
Symbol functions $f_{A}$ may be composed by a star--product kernel:

\begin{equation}
\left( f_{A}\star f_{B}\right) \left( x\right) =\int f_{A}\left(
x_{1}\right) f_{B}\left( x_{2}\right) K\left( x_{1},x_{2},x\right) \mathrm{d}%
x_{1}\mathrm{d}x_{2}=f_{AB}\left( x\right) ,
\end{equation}%
where%
\begin{equation}
K\left( x_{1},x_{2},x\right) :=\mathrm{Tr}\left[ \hat{D}\left( x_{1}\right)
\hat{D}\left( x_{2}\right) \hat{U}^{\dagger }\left( x\right) \right]
\label{ker}
\end{equation}

Previous abstract general framework has also been exploited in the
tomographic picture of quantum mechanics, where with any vector of the
Hilbert space we associate a probability distribution and the space $X$ is
usually associated with the point spectrum of a maximal set of commuting
operators. An additional family of parameters is often provided by a Lie
group $G$ acting by similarity transformations on the selected maximal
Abelian set of operators, in such a manner that the union of the transformed
operators for various elements of $G$ constitutes a tomographic set.

When the space $X$ is an Abelian vector group and the immersion is
associated with a Weyl system, the outlined scheme includes the Wigner--Weyl
picture along with the Moyal formalism \cite{moyal49,Gr}. Other instances
are provided by (generalized) coherent states \cite{coherent}.

This picture, when the vector space $V$ is an associative algebra, provides
an interesting way to construct associative products on the space of
functions $\mathcal{F}\left( X \right)$ by means of a kernel function $K
(x_{1},x_{2},x_{3}),$ representing the ``density" of the associative product
$B$, which satisfies the quadratic condition originated by associativity.

When the measure space $(X,\mathrm{d}x)$ carries additional structures which
are ``compatible" with the immersion in the algebra space of operators,
there are interesting and useful properties inherited by the constructed
star--product. For instance, we may look for the group algebra of a Lie
group $G$ and for the convolution product on $\mathcal{F} \left( G \right).$
In this sense, it would be possible to generalize the picture emerging in
the Weyl--Wigner formalism to generic groups.

As a particular generalization, because the convolution product makes sense
also for functions defined on groupoids, we could like to analyze to what
extent the Weyl--Wigner picture generalizes also to groupoids. In summary,
the problem we are analyzing may be formulated in the following manner: To
study the properties of the star--product when the carrier space carries the
structure of a groupoid and the immersion into the space of operators may be
considered to be compatible with the product structure.

To this aim, we first need to study some properties of groupoids coming back
to the example of sec. 2.

\section{Groupoids: $\ast -$convolution and $\star -$product}

To get more insight, we restrict first to the case of the principal
transitive pair--groupoid $\Gamma =S\times S$\ , with $S$ a countable set of
order $N\leq \infty ,$ discussed in Sec. 2. We begin by observing that the
groupoid algebra realization eq. \eqref{real group alg} looks like a
quantization formula, where an operator $A_{f}$ is associated with a
function $f$ by means of a quantizer $E$. This is not a mere analogy: in
fact, in view of eq. \eqref{orth}, the dequantizer associated with $E_{ik}$
is just $E_{ik}$ (compare with eq. \eqref{dequantU}). In other words, we
have a self--dual pair of quantizer--dequantizer. So, we get the
dequantization formula%
\begin{equation}
\mathrm{Tr}\left[ A_{f}E_{lj}^{T}\right] =\mathrm{Tr}\left[
\sum_{i,k=1}^{N}f(x_{i},x_{k})E_{ik}E_{lj}^{T}\right] =%
\sum_{i,k=1}^{N}f(x_{i},x_{k})\delta _{il}\delta _{kj}=f(x_{l},x_{j}).
\end{equation}%
Thanks to the quantization--dequantization scheme based on Weyl units,
functions on the groupoid can be multiplied by means of a $\star -$product
defined by the kernel associated with the quantizer--dequantizer pair. Thus
we can state the following:

\begin{proposition}
\label{propo} The $\star -$product corresponding to the kernel
\begin{equation*}
K\left( x_{i},x_{k};x_{j},x_{l};x_{m},x_{n}\right) =\mathrm{Tr}\left[
E_{ik}E_{jl}E_{mn}^{T}\right]
\end{equation*}%
coincides with the $\ast-$convolution product of the pair groupoid algebra.
\end{proposition}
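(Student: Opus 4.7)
The plan is to evaluate the kernel $K(x_i,x_k;x_j,x_l;x_m,x_n)=\mathrm{Tr}[E_{ik}E_{jl}E_{mn}^T]$ in closed form using the two structural identities already established in Section~2, namely the multiplication rule $E_{ik}E_{jl}=\delta_{kj}E_{il}$ from \eqref{e prod} and the orthogonality relation $\mathrm{Tr}[E_{il}E_{mn}^T]=\delta_{im}\delta_{ln}$ from \eqref{orth}. Combining them (and using $E_{mn}^T=E_{nm}$) yields the clean expression
\begin{equation*}
K(x_i,x_k;x_j,x_l;x_m,x_n)=\delta_{kj}\,\delta_{im}\,\delta_{ln}.
\end{equation*}
The three Kronecker deltas admit a transparent groupoid reading: $\delta_{kj}$ enforces composability of $(x_i,x_k)$ with $(x_j,x_l)$ in $\Gamma$, while $\delta_{im}\delta_{ln}$ selects precisely those composable pairs whose composition equals $(x_m,x_n)$. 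In other words, the kernel is the indicator function of the graph of the groupoid multiplication, exactly the combinatorial object that defines $*$-convolution in equation~\eqref{conv}.

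Next, I would substitute this kernel into the discrete analogue of the star-product formula, namely
\begin{equation*}
(f_1\star f_2)(x_m,x_n)=\sum_{i,k,j,l=1}^{N} f_1(x_i,x_k)\,f_2(x_j,x_l)\,K(x_i,x_k;x_j,x_l;x_m,x_n),
\end{equation*}
and collapse the four sums using the three deltas. Three of the four indices are fixed ($i=m$, $j=k$, $l=n$), leaving a single summation over $k$, which produces
\begin{equation*}
(f_1\star f_2)(x_m,x_n)=\sum_{k=1}^{N} f_1(x_m,x_k)\,f_2(x_k,x_n)=(f_1*f_2)(x_m,x_n),
\end{equation*}
matching \eqref{conv} verbatim. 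Equivalently, one can argue more structurally: under the self-dual quantizer-dequantizer map $f\mapsto A_f=\sum_{i,k} f(x_i,x_k)E_{ik}$ of \eqref{real group alg}, the star-product is by construction the pullback of the associative product $A_{f_1}A_{f_2}$, and the calculation already performed at the end of Section~2 shows $A_{f_1}A_{f_2}=A_{f_1*f_2}$; so $\star$ and $*$ must agree.

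There is really no serious obstacle here: the entire argument is a three-line index calculation once one has the two lemmas on $E_{ik}$ in hand. The only subtlety worth flagging is a bookkeeping one, namely to keep straight which index labels the row and which the column when taking the transpose $E_{mn}^T=E_{nm}$ of the dequantizer, so that the surviving delta functions indeed match the composition rule $(x_i,x_k)\circ(x_k,x_l)=(x_i,x_l)$ rather than its opposite. Once that orientation is fixed, the coincidence of $\star$ and $*$ follows without further work, and the proposition reduces to a restatement of the fact that the Weyl units $E_{ik}$ furnish a faithful irreducible realization of the pair-groupoid algebra.
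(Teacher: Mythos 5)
Your proposal is correct and follows essentially the same route as the paper's own proof: evaluate the kernel via \eqref{e prod} and \eqref{orth} to obtain $\delta_{kj}\delta_{im}\delta_{ln}$, then collapse the four-fold sum in the star--product to recover \eqref{conv}. The additional structural remark (pulling back $A_{f_1}A_{f_2}=A_{f_1\ast f_2}$ through the self--dual quantizer--dequantizer) is a nice cross-check but does not change the argument.
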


\begin{proof}  In fact, the kernel for the $\star-$product can be readily evaluated by using eqs. \eqref{e prod}-\eqref{orth}
\begin{equation}
K\left( x_{i},x_{k};x_{j},x_{l};x_{m},x_{n}\right) =\mathrm{Tr}\left[
E_{ik}E_{jl}E_{mn}^{T}\right] =\delta _{kj}\mathrm{Tr}\left[ E_{il}E_{mn}^{T}%
\right] =\delta _{kj}\delta _{ln}\delta _{im},
\end{equation}%
so that the $\star -$product reads%
\begin{eqnarray}
\left( f_{1}\star f_{2}\right) (x_{m},x_{n})
&=&\sum\limits_{i,j,k,l=1}^{N}f_{1}(x_{i},x_{k})f_{2}(x_{j},x_{l})\delta
_{kj}\delta _{ln}\delta _{im} \\
&=&\sum\limits_{k=1}^{N}f_{1}(x_{m},x_{k})f_{2}(x_{k},x_{n})=\left(
f_{1}\ast f_{2}\right) (x_{m},x_{n})  \notag
\end{eqnarray}%
and coincides with the $\ast -$convolution product.
\end{proof}

The simple proposition is one of the observations of this paper. In the next
sections it will be checked by discussing some instances of physical
interest and will be generalized to an example of Lie groupoids. In Appendix
B, see Prop. \ref{gen_conv}, this result will be generalized further to an
arbitrary transitive, possibly not principal, groupoid of finite order.

\section{Immersions of groupoids into Hilbert spaces}

As can be easily inferred from sec. 2, formulae \eqref{e prod},\eqref{orth},
it is clear that any orthononormal basis of $\mathcal{H},$\ say associated
with a maximal set of commuting operators, can be used to explicitly
construct "realizations" of groupoids. Let see some specific example.

We begin with a finite dimensional case. Consider the finite set $%
S_{j}=\left\{ m,-j\leq m\leq j\right\} $, $j$ an integer or semi-integer
number, with the counting measure $\mu _{j}\left( \{m\}\right) =1$ for any $%
m $. The principal transitive pair--groupoid $\Gamma _{j}=S_{j}\times S_{j}$%
, with the composition law $\left( m,m^{\prime }\right) \circ \left(
m^{\prime },m^{\prime \prime }\right) =\left( m,m^{\prime \prime }\right) $
and the product measure $\mu _{j}\times \mu _{j}$, is a measure space which
we immerse into the Hilbert space of operators $B\left( \mathcal{H}%
_{j}\right) $ of the $\left( 2j+1\right) -$dimensional Hilbert space \ $%
\mathcal{H}_{^{j}}$ by means of:%
\begin{equation}
\left( m,m^{\prime }\right) \in \Gamma _{j}\longmapsto \left\vert
jm\right\rangle \left\langle jm^{\prime }\right\vert \in B\left( \mathcal{H}%
_{j}\right)
\end{equation}%
where $\left\vert jm\right\rangle ,\left\vert jm^{\prime }\right\rangle $
are the canonical orthonormal eigenvectors of the angular momentum operators
$\hat{J}^{2},$ $\hat{J}_{z}$; $\hat{J}^{2}\left\vert jm\right\rangle
=j(j+1)\left\vert jm\right\rangle $, $\hat{J}_{z}\left\vert jm\right\rangle
=m\left\vert jm\right\rangle $. Dropping the label $j,$ we may write the
completeness relation%
\begin{equation}
\sum_{m=-j}^{j}\left\vert m\right\rangle \left\langle m\right\vert =\mathbb{I%
}_{j},
\end{equation}%
which eventually gives
\begin{equation}
\hat{A}=\sum_{m,m^{\prime }=-j}^{j}\tau _{A}^{j}\left( m,m^{\prime }\right)
\left\vert m\right\rangle \left\langle m^{\prime }\right\vert ,
\label{spin quant}
\end{equation}%
where
\begin{equation}
\tau _{A}^{j}\left( m,m^{\prime }\right) =\mathrm{Tr}\left[ \hat{A}\left(
\left\vert m\right\rangle \left\langle m^{\prime }\right\vert \right)
^{\dagger }\right] =\langle m|\hat{A}|m^{\prime }\rangle .
\end{equation}%
The above equations provide a self--dual pair of quantizer--dequantizer%
\begin{equation}
\hat{D}_{j}\left( m,m^{\prime }\right) =\hat{U}_{j}\left( m,m^{\prime
}\right) =\left\vert m\right\rangle \left\langle m^{\prime }\right\vert .
\end{equation}%
As the Weyl basis operators $\left\{ \left\vert m\right\rangle \left\langle
m^{\prime }\right\vert \right\} $ are associated with the Weyl units $%
\left\{ E_{mm^{\prime }}\right\} ,$ because of Proposition \ref{propo} the
star--product for the symbols $\tau ^{\prime }$s coincides with the
convolution product of the groupoid algebra of $\Gamma _{j}$.

By means of a direct sum over the label $j$ we may construct the infinite
dimensional case starting with the countable groupoid:
\begin{equation*}
\Gamma =\bigsqcup _{j}\Gamma _{j},
\end{equation*}
with composition law:
\begin{equation*}
\left( jm,jm^{\prime }\right) \circ \left( j^{\prime }m^{\prime },j^{\prime
}m^{\prime \prime }\right) =\left( jm,jm^{\prime \prime }\right), \quad
\mathrm{iff} \quad j^{\prime }=j,
\end{equation*}
and immersion into the space of Hilbert--Schmidt operators $\mathcal{I}%
_{2}\subset B\left( \mathcal{H}\right)$, $\mathcal{H}=\bigoplus _{j}\mathcal{%
H}_{j},$ given by
\begin{equation}
\left( jm,jm^{\prime }\right) \in \Gamma \longmapsto \left\vert
jm\right\rangle \left\langle jm^{\prime }\right\vert \in B\left( \mathcal{H}
\right) .
\end{equation}

The groupoid $\Gamma ,$\ as disjoint union of groupoids, is principal
nontransitive and the immersion decomposes trivially into the finite
dimensional cases previously analyzed. A less trivial infinite dimensional
case is the following.

Let us consider the discrete space $\mathbb{N}_{0}=\{0,1,2,...,\}$ with
counting measure $\mu \left( \{n\}\right) =1$ for any natural number $n$.
Consider again the pair--groupoid $\Gamma =\mathbb{N}_{0}\times \mathbb{N}%
_{0}$\ with the product measure $\mu \times \mu ,$ it is a principal
transitive groupoid with composition rule $\left( n,m\right) \circ \left(
m,p\right) =\left( n,p\right) $. Given a complex separable Hilbert space $%
\mathcal{H}$, we can immerse the measure space $(\Gamma ,\mu \times \mu )$
into the Hilbert space $\mathcal{I}_{2}$ of Hilbert-Schmidt operators on $%
\mathcal{H}$ as:%
\begin{equation}
\left( n,m\right) \in \Gamma =\mathbb{N}_{0}\times \mathbb{N}_{0}\longmapsto
\left\vert n\right\rangle \left\langle m\right\vert \in \mathcal{I}_{2},
\end{equation}%
where $\left\vert n\right\rangle ,\left\vert m\right\rangle $ are Fock boson
states, i.e. eigenvectors of the number operator $\hat{N}$ associated with a
harmonic oscillator:%
\begin{equation}
\hat{N}\left\vert n\right\rangle \left\vert n\right\rangle ,\quad
\left\langle n|m\right\rangle =\delta _{nm}.
\end{equation}%
The completeness relation%
\begin{equation}
\sum_{n}\left\vert n\right\rangle \left\langle n\right\vert =\mathbb{I}
\end{equation}%
allows for writing%
\begin{eqnarray}
\hat{A} &=&\sum_{n,m}F_{A}\left( n,m\right) \left\vert n\right\rangle
\left\langle m\right\vert ,  \label{represA} \\
F_{A}\left( n,m\right) &=&\mathrm{Tr}\left[ \hat{A}\left( \left\vert
n\right\rangle \left\langle m\right\vert \right) ^{\dagger }\right] =\langle
n|\hat{A}|m\rangle ,
\end{eqnarray}%
so that the Weyl basis $\left\vert n\right\rangle \left\langle m\right\vert $
provides again a self--dual pair of quantizer--dequantizer%
\begin{equation}
\hat{D}\left( n,m\right) =\hat{U}\left( n,m\right) =\left\vert
n\right\rangle \left\langle m\right\vert .
\end{equation}%
The corresponding kernel of the star--product for the symbols $F,$ given by
eq. \eqref{ker} is expressed in terms of Kronecker deltas as%
\begin{eqnarray}
K\left( n_{1},m_{1};n_{2},m_{2};n,m\right) &=&\mathrm{Tr}\left[ \hat{D}%
\left( n_{1},m_{1}\right) \hat{D}\left( n_{2},m_{2}\right) \hat{U}^{\dagger
}\left( n,m\right) \right] \\
&=&\delta _{nn_{1}}\delta _{m_{1}n_{2}}\delta _{m_{2}m},
\end{eqnarray}%
and the star--product reads%
\begin{eqnarray}
\left( F_{A}\star F_{B}\right) \left( n,m\right) &=&\sum_{\overset{%
n_{1},m_{1}}{_{n_{2}},_{m_{2}}}}F_{A}\left( n_{1},m_{1}\right) F_{B}\left(
n_{2},m_{2}\right) K\left( n_{1},m_{1};n_{2},m_{2};n,m\right)  \notag \\
&=&\sum_{m_{1}}F_{A}\left( n,m_{1}\right) F_{B}\left( m_{1},m\right) .
\end{eqnarray}%
which is just the convolution product of the groupoid algebra of $\Gamma =%
\mathbb{N}_{0}\times \mathbb{N}_{0}$ given by the usual product of matrices
in agreement with the result of Proposition \ref{propo}.

As a second example, we will consider given an operator $\hat{A}$, a new
symbol $f_A(x,y)$ defined by:
\begin{eqnarray}
f_{A}\left( x,y\right) &=&\sum_{n,m}F_{A}\left( n,m\right) \varphi
_{n}\left( x\right) \varphi _{m}^{\ast }\left( y\right) \\
&=&\mathrm{Tr}\left[ \sum_{n,m}F_{A}\left( n,m\right) \left( \left\vert
n\right\rangle \left\langle m\right\vert \right) \left( \left\vert
x\right\rangle \left\langle y\right\vert \right) ^{\dagger }\right] =\mathrm{%
Tr}\left[ \hat{A}\left( \left\vert x\right\rangle \left\langle y\right\vert
\right) ^{\dagger }\right] .  \notag
\end{eqnarray}%
Here $\varphi _{n}\left( x\right) :=\left\langle x|n\right\rangle $ is the $%
n- $th Hermite function, i.e. the $n-$th excited state wave function of the
harmonic oscillator in the position representation and $\left\vert
x\right\rangle ,\left\vert y\right\rangle $ are (improper) eigenvectors of
the position operator $\hat{q}:$%
\begin{equation}
\hat{q}\left\vert x\right\rangle =x\left\vert x\right\rangle ,\quad
\left\langle x|y\right\rangle =\delta \left( x-y\right) .
\end{equation}%
\textit{Vice versa}, the old symbol $F_A (n,m)$ can be recovered from the
new one as:%
\begin{equation}
F_{A}\left( n,m\right) =\int \mathrm{d}x\mathrm{d}y \, f_{A}\left(
x,y\right) \varphi _{n}^{\ast }\left( x\right) \varphi _{m}\left( y\right) .
\end{equation}

The symbol $f_{A}\left( x,y\right) $ may also be obtained considering the
pair groupoid $\Gamma =\mathbb{R}\times \mathbb{R}$\ with the product
measure $\mathrm{d}x\mathrm{d}y$. We may think of $\Gamma = \mathbb{R}\times
\mathbb{R}$ as a principal transitive groupoid $\Gamma $ with composition
rule $\left( x,y\right) \circ \left( y,z\right) =\left( x,z\right) $ and
choose the immersion:%
\begin{equation}
\left( x,y\right) \mathbb{\longmapsto }\left\vert x\right\rangle
\left\langle y\right\vert ,
\end{equation}%
in the space of rank--one operators in the Gelfand triple $\mathcal{S}(%
\mathbb{R}) \subset L^2(\mathbb{R}) \subset \mathcal{S}^{\prime}(\mathbb{R})$%
. Thanks to the completeness relation%
\begin{equation}
\int \mathrm{d}x\left\vert x\right\rangle \left\langle x\right\vert =\mathbb{%
I}
\end{equation}%
we may write, for a given operator $\hat{A}$, formulae similar to %
\eqref{represA},%
\begin{eqnarray}
\hat{A} &=&\int \mathrm{d}x\mathrm{d}y \, \, f_{A}\left( x,y\right)
\left\vert x\right\rangle \left\langle y\right\vert ,  \label{cont rec} \\
f_{A}\left( x,y\right) &=&\mathrm{Tr}\left[ \hat{A}\left( \left\vert
x\right\rangle \left\langle y\right\vert \right) ^{\dagger }\right] =
\langle x|\hat{A}|y \rangle .
\end{eqnarray}%
In other words, the operator $\left\vert x\right\rangle \left\langle
y\right\vert $ provide a self--dual pair of quantizer--dequantizer
\begin{equation}
\hat{D}\left( x,y\right) =\hat{U}\left( x,y\right) =\left\vert
x\right\rangle \left\langle y\right\vert ,
\end{equation}%
satisfying%
\begin{equation}
\mathrm{Tr}\left[ \hat{D}\left( x,y\right) \hat{U}^{\dagger }\left(
x^{\prime },y^{\prime }\right) \right] =\delta \left( x-x^{\prime }\right)
\delta \left( y-y^{\prime }\right) .
\end{equation}%
The kernel of the corresponding star--product for the symbols, eq. %
\eqref{ker}, is expressed in terms of Dirac delta functions:%
\begin{eqnarray}
K\left( x_{1},y_{1};x_{2},y_{2};x,y\right) &=&\mathrm{Tr}\left[ \hat{D}%
\left( x_{1},y_{1}\right) \hat{D}\left( x_{2},y_{2}\right) \hat{U}^{\dagger
}\left( x,y\right) \right] \\
&=&\delta \left( x-x_{1}\right) \delta \left( y_{1}-x_{2}\right) \delta
\left( y_{2}-y\right) .  \notag
\end{eqnarray}%
Thus the star--product reads:%
\begin{eqnarray}
\left( f_{A}\star f_{B}\right) \left( x,y\right) &=&\int \mathrm{d}x_{1}%
\mathrm{d}y_{1}\mathrm{d}x_{2}\mathrm{d}y_{2} \, \, f_{A}\left(
x_{1},y_{1}\right) f_{B}\left( x_{2},y_{2}\right) K\left(
x_{1},y_{1};x_{2},y_{2};x,y\right) \\
&=& \int \mathrm{d}y_{1}\, \, f_{A}\left( x,y_{1}\right) f_{B}\left(
y_{1},y\right) .  \notag
\end{eqnarray}%
In conclusion, we have recovered the usual product of matrices with
continuous labels given by eq. \eqref{contmat}, which is the convolution
product of the groupoid algebra of $\Gamma =\mathbb{R}\times \mathbb{R}.$

Finally, for the three cases considered above, we remark that the
generalization to the multi--mode case $\Gamma =X\times X\times ...\times X$
is straightforward. In fact, as a fourth example, consider a two--mode
harmonic oscillator. We have to immerse $X\times X$ onto a Weyl basis
generated by the oscillator eigenstates $\left\vert n_{1},n_{2}\right\rangle
=\left\vert n_{1}\right\rangle \otimes \left\vert n_{2}\right\rangle :$%
\begin{equation}
\left( \left( n_{1},n_{2}\right) ,\left( m_{1},m_{2}\right) \right) \mapsto
\left\vert n_{1},n_{2}\right\rangle \left\langle m_{1},m_{2}\right\vert
\end{equation}%
This immersion realizes the groupoid combination rule
\begin{equation*}
\left( \left( n_{1},n_{2}\right) ,\left( m_{1}^{\prime },m_{2}^{\prime
}\right) \right) \circ \left( \left( m_{1}^{\prime },m_{2}^{\prime }\right)
,\left( m_{1},m_{2}\right) \right) =\left( \left( n_{1},n_{2}\right) ,\left(
m_{1},m_{2}\right) \right)
\end{equation*}%
yielding $0$ for the forbidden groupoid products. The self--dual
quantizer--dequantizer pair reads:%
\begin{equation}
\hat{D}\left( n_{1},n_{2};m_{1},m_{2}\right) =\hat{U}\left(
n_{1},n_{2};m_{1},m_{2}\right) =\left\vert n_{1},n_{2}\right\rangle
\left\langle m_{1},m_{2}\right\vert ,
\end{equation}%
and we obtain at once%
\begin{equation}
\hat{A}=\sum_{n_{1},n_{2},m_{1},m_{2}}F_{A}\left(
n_{1},n_{2};m_{1},m_{2}\right) \left\vert n_{1},n_{2}\right\rangle
\left\langle m_{1},m_{2}\right\vert ,
\end{equation}%
where%
\begin{equation}
F_{A}\left( n_{1},n_{2};m_{1},m_{2}\right) =\mathrm{Tr}\left[ \hat{A}\, \hat{%
U}^{\dagger }\left( n_{1},n_{2};m_{1},m_{2}\right) \right] = \langle
n_{1},n_{2}\mid \hat{A}\, \mid m_{1},m_{2} \rangle .
\end{equation}%
The kernel of the star--product of the symbol functions can be readily
evaluated in the usual way, and we get%
\begin{equation}
\left( F_{A}\star F_{B}\right) \left( n_{1},n_{2};m_{1},m_{2}\right)
=\sum_{m_{1}^{\prime },m_{2}^{\prime }}F_{A}\left( n_{1},n_{2};m_{1}^{\prime
},m_{2}^{\prime }\right) F_{B}\left( m_{1}^{\prime },m_{2}^{\prime
};m_{1},m_{2}\right) ,
\end{equation}%
which is again a matrix product, as expected.

\medskip

We will end up this section by exhibiting a situation where Prop. \ref{propo}
will not apply. In contrast with the previous cases, let us consider now the
product groupoid $\Gamma =\mathbb{R}^{2}\times \mathbb{R}^{2}\simeq \mathbb{C%
}\times \mathbb{C},\left( \left( x,y\right) ,\left( u,v\right) \right)
\simeq \left( z,w\right) ,$ with $z=x+iy,w=u+iv$. The immersion of $\Gamma $%
\ given by $\left( z,w\right) \longmapsto \left\vert z\right\rangle
\left\langle w\right\vert$, where $\left\vert z\right\rangle ,\left\vert
w\right\rangle $ are coherent states of a harmonic oscillator, i.e., $\hat{a}
| z \rangle = z | z \rangle$ with $\hat{a}, \hat{a}^\dagger$ creation and
annihilation operators, yields a quantizer--dequantizer self--dual pair $%
\hat{D}\left( z,w\right) =\hat{U}\left( z,w\right) =\left\vert
z\right\rangle \left\langle w\right\vert .$ The corresponding star--product
does not coincide with the groupoid algebra convolution, due to the
non--orthogonality of the coherent states. In fact, from the completeness
relation%
\begin{equation}
\int \frac{\mathrm{d}^{2}z}{\pi }\left\vert z\right\rangle \left\langle
z\right\vert =\mathbb{I}
\end{equation}%
we get equations similar to \eqref{represA} again:%
\begin{eqnarray}
\hat{A} &=&\int \frac{\mathrm{d}^{2}z}{\pi }\frac{\mathrm{d}^{2}w}{\pi } \,
\, \mathrm{f}_{A}\left( z,w\right) \left\vert z\right\rangle \left\langle
w\right\vert , \\
\mathrm{f}_{A}\left( z,w\right) &=&\mathrm{Tr}\left[ \hat{A}\left(
\left\vert z\right\rangle \left\langle w\right\vert \right) ^{\dagger }%
\right] = \langle z|\hat{A}|w \rangle .  \notag
\end{eqnarray}%
with star--product kernel%
\begin{eqnarray}
K\left( z_{1},w_{1};z_{2},w_{2};z,w\right) &=&\mathrm{Tr}\left[ D\left(
z_{1},w_{1}\right) D\left( z_{2},w_{2}\right) U^{\dagger }\left( z,w\right) %
\right] \\
&=&\left\langle z|z_{1}\right\rangle \left\langle w_{1}|z_{2}\right\rangle
\left\langle w_{2}|w\right\rangle ,  \notag
\end{eqnarray}%
where%
\begin{equation}
\left\langle w|z\right\rangle =\exp \left[ -\frac{1}{2}\left( \left\vert
w\right\vert ^{2}+\left\vert z\right\vert ^{2}\right) +w^{\ast }z\right]
\neq \delta ^{\left( 2\right) }\left( z-w\right) \equiv \delta \left(
x-u\right) \delta \left( y-v\right) .
\end{equation}%
On the contrary, the groupoid algebra convolution has a kernel expressed in
terms of Dirac delta functions.

\section{Groupoid algebras and tomograms}

The symbols defined in the previous section, $f_{A}\left( x,y\right) ,$ $%
F_{A}\left( n,m\right) $ and $\tau _{A}^{j}\left( m.m^{\prime }\right) $ may
be considered quasi--distributions. When the operator $\hat{A}$ is
Hermitian, positive with trace one, it can be associated with a state,
however, in general its symbol cannot be interpreted as a probability
distribution. Their diagonal parts however, say $f_{A}\left( x,x\right) $, $%
F_{A}\left( n,n\right) $ and $\tau _{A}^{j}\left( m,m\right) $ are positive
but are not sufficient to reconstruct $\hat{A}$. In fact, the off--diagonal
terms are components on base elements orthogonal to the diagonal ones.
However, in the tomographic picture of quantum mechanics, it is possible to
\textquotedblleft rotate" the diagonal base elements in such a way that they
form a family of bases, possibly overcomplete, which allows for a state
reconstruction from the diagonal part of their symbols. These families very
often are generated by acting with a representation of a group, whose
elements are labeled by a set of parameters which appear as independent
variables in the tomographic (i.e., diagonal) symbols. Inspired by the
tomographic procedures, in this section we will recover known tomographic
schemes, like the spin, the photon number and the symplectic tomographies,
from the groupoid immersions discussed in the previous section.

We will begin with spin tomography. Let us consider an irreducible unitary $%
\left( 2j+1\right) -$ dimensional representation $D^{j}\left( g\right) $ of
the group $SU\left( 2\right) $ and introduce the states $D^{j}\left(
g\right) \left\vert jm\right\rangle =:\left\vert g,jm\right\rangle .$ Here $%
g $\ is a suitable parametrization \cite{Ibort II} of the elements of $%
SU\left( 2\right) $. Then, using eq. \eqref{spin quant} we can write%
\begin{equation}
D^{j\dagger }\left( g\right) \hat{A}\,D^{j}\left( g\right)
=\sum_{m,m^{\prime }=-j}^{j}\tau _{D^{j}\left( g\right) \hat{A}D^{j\dagger
}\left( g\right) }^{j}\left( m,m^{\prime }\right) \left\vert m\right\rangle
\left\langle m^{\prime }\right\vert ,
\end{equation}%
which gives%
\begin{equation}
\hat{A}=\sum_{m,m^{\prime }=-j}^{j}w_{A}^{j}\left( g;m,m^{\prime }\right)
D^{j}\left( g\right) \left\vert jm\right\rangle \left\langle jm^{\prime
}\right\vert D^{j\dagger }\left( g\right)  \label{spin}
\end{equation}%
where
\begin{equation}
w_{A}^{j}\left( g;m,m^{\prime }\right) =\langle g,jm|\hat{A}|g,jm^{\prime
}\rangle =\mathrm{Tr}\left[ \hat{A}\left( D^{j}\left( g\right) \left\vert
jm\right\rangle \left\langle jm^{\prime }\right\vert D^{j\dagger }\left(
g\right) \right) ^{\dagger }\right] =\tau _{D^{j}\left( g\right) \hat{A}%
D^{j\dagger }\left( g\right) }^{j}  \notag
\end{equation}%
In other words, we get a new self--dual pair of quantizer-dequantizer, $%
\left\vert g,jm\right\rangle \left\langle g,jm^{\prime }\right\vert
=D^{j}\left( g\right) \left\vert jm\right\rangle \left\langle jm^{\prime
}\right\vert D^{j\dagger }\left( g\right) ,$\ which provides a new symbol $%
w_{A}^{j}\left( g;m,m^{\prime }\right) $ and a reconstruction formula, eq. %
\eqref{spin}, for $\hat{A}$. The diagonal part of the new symbol $%
w_{A}^{j}\left( g;m,m\right) $ is just the spin tomogram of $A,$ the $m-$th
\ component of a $\left( 2j+1\right) -$dimensional vector. When $A$ is a
density state operator, the vector is stochastic. Spin tomograms have their
own quantizer--dequantizer self--dual pair, extensively discussed in \cite%
{3m2000,Ibort II}.

As for the photon number tomography, let us consider the usual displacement
operator $\mathcal{D}\left( z\right) =\exp \left( z\hat{a}^{\dagger
}-z^{\ast }\hat{a}\right) ,$ with $\hat{a}^{\dagger },\hat{a}$ the creation
and annihilation operators, and $z$ a complex number. Then
\begin{equation}
\mathcal{D}^{\dagger }\left( z\right) \hat{A}\,\mathcal{D}\left( z\right)
=\sum_{n,m}F_{\mathcal{D}^{\dagger }\left( z\right) \hat{A}\mathcal{D}\left(
z\right) }\left( n,m\right) \left\vert n\right\rangle \left\langle
m\right\vert ,
\end{equation}%
which may be read, introducing $\left\vert n,z\right\rangle =\mathcal{D}%
\left( z\right) \left\vert n\right\rangle ,$ as:
\begin{eqnarray}
\hat{A} &=&\sum_{n,m}\Phi _{\hat{A}}\left( n,m;z\right) \mathcal{D}\left(
z\right) \left\vert n\right\rangle \left\langle m\right\vert \mathcal{D}%
^{\dagger }\left( z\right) \\
&=&\sum_{n,m}\Phi _{\hat{A}}\left( n,m;z\right) \left\vert n,z\right\rangle
\left\langle m,z\right\vert  \notag
\end{eqnarray}%
with%
\begin{eqnarray}
F_{\mathcal{D}^{\dagger }\left( z\right) \hat{A}\mathcal{D}\left( z\right)
}\left( n,m\right) &=&\mathrm{Tr}\left[ \mathcal{D}^{\dagger }\left(
z\right) \hat{A}\mathcal{D}\left( z\right) \left( \left\vert n\right\rangle
\left\langle m\right\vert \right) ^{\dagger }\right] \\
&=&\mathrm{Tr}\left[ \hat{A}\left( \left\vert n,z\right\rangle \left\langle
m,z\right\vert \right) ^{\dagger }\right] =\Phi _{\hat{A}}\left(
n,m;z\right) .  \notag
\end{eqnarray}%
Again, $\Phi _{\hat{A}}\left( n,m;z\right) $ is a symbol corresponding to a
new quantizer--dequantizer self--dual pair, given by
\begin{equation}
\mathcal{D}\left( z\right) \left\vert n\right\rangle \left\langle
m\right\vert \mathcal{D}^{\dagger }\left( z\right) =\left\vert
n,z\right\rangle \left\langle m,z\right\vert .
\end{equation}%
The diagonal part of the new symbol is nothing but the photon number
tomographic symbol $\mathcal{P}_{\hat{A}}\left( n,-z\right) $ (see for
instance \cite{MaMaToEur97}):%
\begin{equation}
\Phi _{\hat{A}}\left( n,n;z\right) =\mathcal{P}_{\hat{A}}\left( n,-z\right)
\end{equation}%
which admits the tomographic reconstruction formula%
\begin{equation}
\hat{A}=\int \frac{\mathrm{d}^{2}z}{\pi }\sum_{n=0}^{\infty }\mathcal{P}_{%
\hat{A}}\left( n,-z\right) \hat{D}_{\varphi }\left( n,z\right) ,
\end{equation}%
based on the following tomographic photon number quantizer:%
\begin{equation}
\hat{D}_{\varphi }\left( n,z\right) =\frac{2}{1-s}\left( \frac{s+1}{s-1}%
\right) ^{n}\hat{T}\left( z,-s\right) ,
\end{equation}%
where the $s-$ordered\ displaced parity operator, $\left( -1\leq s\leq
1\right) ,$ see \cite{Cahill-Glauber1969}%
\begin{equation}
\hat{T}\left( z,s\right) =\frac{2}{1-s}\mathcal{\hat{D}}\left( z\right)
\left( \frac{s+1}{s-1}\right) ^{\hat{a}^{\dagger }\hat{a}}\mathcal{\hat{D}}%
^{\dagger }\left( z\right) .
\end{equation}

As for symplectic tomography, let us consider the unitary operator
associated with a transformation $\hat{S}_{\mu \nu }$ (see \cite{Ibort
I,MaMaToPLA96} for more details):
\begin{eqnarray}
\hat{S}_{\mu \nu }\,\hat{q}\,\hat{S}_{\mu \nu }^{\dagger } &=&\mu \hat{q}%
+\nu \hat{p}, \\
\hat{S}_{\mu \nu }\,\hat{q}\,\hat{S}_{\mu \nu }^{\dagger }\hat{S}_{\mu \nu
}\left\vert x\right\rangle &=&x\hat{S}_{\mu \nu }\left\vert x\right\rangle
=x\left\vert x,\mu ,\nu \right\rangle ,  \notag
\end{eqnarray}%
with $\mu ,\nu $ real parameters. Then, by eq. \eqref{cont rec}:%
\begin{equation*}
\hat{S}_{\mu \nu }^{\dagger }\hat{A}\hat{S}_{\mu \nu }=\int \mathrm{d}x%
\mathrm{d}y\,\,f_{\hat{S}_{\mu \nu }^{\dagger }\hat{A}\hat{S}_{\mu \nu
}}\left( x,y\right) \left\vert x\right\rangle \left\langle y\right\vert
\end{equation*}%
or, equivalently:
\begin{equation}
\hat{A}=\int \mathrm{d}x\mathrm{d}y\,\,f_{\hat{S}_{\mu \nu }^{\dagger }\hat{A%
}\hat{S}_{\mu \nu }}\left( x,y\right) \hat{S}_{\mu \nu }\left\vert
x\right\rangle \left\langle y\right\vert \hat{S}_{\mu \nu }^{\dagger }
\end{equation}%
that we may write as%
\begin{equation}
\hat{A}=\int \mathrm{d}x\mathrm{d}y\,\,W_{\hat{A}}\left( x,y;\mu ,\nu
\right) \left\vert x,\mu ,\nu \right\rangle \left\langle y,\mu ,\nu
\right\vert ,
\end{equation}%
where%
\begin{eqnarray}
f_{\hat{S}_{\mu \nu }^{\dagger }\hat{A}\hat{S}_{\mu \nu }}\left( x,y\right)
&=&\mathrm{Tr}\left[ \hat{S}_{\mu \nu }^{\dagger }\hat{A}\hat{S}_{\mu \nu
}\left( \left\vert x\right\rangle \left\langle y\right\vert \right)
^{\dagger }\right] \\
&=&\mathrm{Tr}\left[ \hat{A}\left( \left\vert x,\mu ,\nu \right\rangle
\left\langle y,\mu ,\nu \right\vert \right) ^{\dagger }\right] =:W_{\hat{A}%
}\left( x,y;\mu ,\nu \right) .  \notag
\end{eqnarray}%
The new symbol $W_{\hat{A}}\left( x,y;\mu ,\nu \right) $ corresponds to a
new quantizer--dequantizer self--dual pair, i.e.,
\begin{equation}
\hat{S}_{\mu \nu }\left\vert x\right\rangle \left\langle y\right\vert \hat{S}%
_{\mu \nu }^{\dagger }=\left\vert x,\mu ,\nu \right\rangle \left\langle
y,\mu ,\nu \right\vert .
\end{equation}

By construction, the diagonal part of the new symbol is just the symplectic
tomogram $\mathcal{W}_{\hat{A}}\left( x,\mu ,\nu \right) $ of $\hat{A}$%
\begin{eqnarray}
W_{\hat{A}}\left( x,x;\mu ,\nu \right) &=&\mathrm{Tr}\left[ \hat{A}%
\left\vert x,\mu ,\nu \right\rangle \left\langle x,\mu ,\nu \right\vert %
\right] \\
&=&\mathrm{Tr}\left[ \hat{A}\delta (x\mathbb{I}-\mu \hat{q}-\nu \hat{p})%
\right] =:\mathcal{W}_{\hat{A}}\left( x,\mu ,\nu \right)  \notag
\end{eqnarray}%
On the other hand, associated with the dequantizer $\hat{U}_{\Sigma }\left(
x,\mu ,\nu \right) =\delta (x\mathbb{I}-\mu \hat{q}-\nu \hat{p}),$
symplectic tomography has its own quantizer $\hat{D}_{\Sigma }\left( x,\mu
,\nu \right) =\exp \left[ \mathrm{i}(x\mathbb{I}-\mu \hat{q}-\nu \hat{p})%
\right] /2\pi $, yielding the reconstruction formula

\begin{equation}
\hat{A}=\frac{1}{2\pi }\int \mathcal{W}_{A}(x,\mu ,\nu )\, e^{\mathrm{i}(x%
\mathbb{I}-\mu \hat{q}-\nu \hat{p})}\, \mathrm{d}x\mathrm{d}\mu \mathrm{d}%
\nu .  \label{symp rec}
\end{equation}

All in all, we have generalized the tomographic schemes introducing
tomographic quasi--distributions, whose diagonal part coincides with the
usual tomographies. Such tomographic quasi--distributions are obtained by
acting with families of unitary operators on certain self--dual
quantizer--dequantizer pairs, so they yield the same star--product kernels
of the generating ones, which in turn coincide with the convolution of some
groupoid algebras. However, the diagonal parts of these tomographic
quasi--distributions are in correspondence with their own tomographic
quantizer--dequantizer pairs which yield tomographic star--products, and in
general not coincident with the groupoid algebra convolutions. The two kind
of kernels can be related by means of the so called inter-twining maps and
dual symbols \cite{OlgaPatrizia}. Here, for instance, we limit ourselves to
show how an inter-twining map for the symplectic case, defined as the symbol
of $\hat{D}_{\Sigma }\left( x^{\prime },\mu ^{\prime },\nu ^{\prime }\right)
:$
\begin{equation}
W_{\hat{D}_{\Sigma }\left( x^{\prime },\mu ^{\prime },\nu ^{\prime }\right)
}\left( x,y;\mu ,\nu \right) =\mathrm{Tr}\left[ \hat{D}_{\Sigma }\left(
x^{\prime },\mu ^{\prime },\nu ^{\prime }\right) \left( \left\vert x,\mu
,\nu \right\rangle \left\langle y,\mu ,\nu \right\vert \right) ^{\dagger }%
\right] ,
\end{equation}%
allows to reconstruct the off--diagonal quasi-distribution by the diagonal
tomographic symbol. In fact, we may write, using eq. \eqref{symp rec}:%
\begin{eqnarray}
W_{\hat{A}}\left( x,y;\mu ,\nu \right) &=&\mathrm{Tr}\left[ \hat{A}\left(
\left\vert x,\mu ,\nu \right\rangle \left\langle y,\mu ,\nu \right\vert
\right) ^{\dagger }\right] \\
&=&\int \mathcal{W}_{A}(x^{\prime },\mu ^{\prime },\nu ^{\prime })W_{\hat{D}%
_{\Sigma }\left( x^{\prime },\mu ^{\prime },\nu ^{\prime }\right) }\left(
x,y;\mu ,\nu \right) \mathrm{d}x^{\prime }\mathrm{d}\mu ^{\prime }\mathrm{d}%
\nu ^{\prime }  \notag
\end{eqnarray}

\section{Conclusions}

To summarize we point out the main results of our paper. We have contributed
to clarify the role of groupoid structures in quantum mechanics by stating a
proposition connecting the convolution product for functions on a countable,
principal and transitive groupoid with the star--product scheme
corresponding to quantizer--dequantizer operators associated with Weyl
units. Such a proposition has been checked for finite and infinite Hilbert
spaces of qudits and modes of harmonic oscillators respectively. Also, we
have discussed a simple continuous case. Having done this, we have
considered a Fock space realization in a two mode system, to pave the way
for an extension to quantum field theory in a future work.

Known tomographic probability pictures were recognized as diagonal part of
quasi--distributions generated by means of groupoids in the tomographic
approach. We have also considered a groupoid associated with symplectic
tomography, so exhibiting the connection with phase--space and Weyl systems.

An Appendix, containing further mathematical details and generalizations of
the statements relating convolution-- and star--products has been provided.

We will end up this summary by recalling that category theory provides a
convenient setting to discuss groupoids, actually a groupoid is just a
(small) category all whose morphisms are invertible. It actually happens
that 2--groupoids, a deep generalization of groupoids from the point of view
of higher categories arises naturally when discussing the fundamental
structure of quantum mechanics and quantum information theory, as a matter
of fact they capture the mathematical structure of Schwinger's measurements
algebra \cite{Schwinger}. These issues will be considered in a forthcoming
paper.

\section*{Appendix}

\subsection*{A. Groupoids}

Bearing in mind the example of the pair--groupoid $\Gamma$ of sec. 2, we
will discuss here the definition and some properties of an abstract groupoid
$G$. To define a groupoid $G$ we need \cite{Connes,Renault}:

\begin{enumerate}
\item A set $G$ with a subset $G_{0}\subseteq G.$

\item Two maps $r$ and $s$ from $G$ onto $G_{0}$.

\item A binary operation $\circ $ (called multiplication) which is defined
for pairs $\gamma _{1}$ and $\gamma _{2}$ of elements in $G$ whenever $%
s(\gamma _{1}) = r(\gamma _{2})$ (we will say then that $\gamma_1$ and $%
\gamma_2$ are composable).
\end{enumerate}

Moreover $\left\{ r,s,\circ \right\} $ must satisfy the following axioms:

\begin{description}
\item[a)] $r(\gamma _{1}\circ \gamma _{2}) = r(\gamma _{1})$, $s(\gamma
_{1}\circ \gamma _{2})=s(\gamma _{2})$ for all composable $\gamma _{1}$ and $%
\gamma _{2}$.

\item[b)] $r(\gamma _{0})=s(\gamma _{0})=\gamma _{0}$, for all $\gamma
_{0}\in G_{0}.$

\item[c)] $r(\gamma)$, $s(\gamma)$ are left and right unities for $\gamma$
respectively, i.e., $r(\gamma )\circ \gamma =\gamma$, $\gamma \circ s(\gamma
)=\gamma$, for all $\gamma \in G.$

\item[d)] The multiplication $\circ $ is associative: if $(\gamma _{1}\circ
\gamma _{2})\circ \gamma _{3}$ is defined, then $\gamma _{1}\circ \left(
\gamma _{2}\circ \gamma _{3}\right) $ exists and $(\gamma _{1}\circ \gamma
_{2})\circ \gamma _{3}=\gamma _{1}\circ \left( \gamma _{2}\circ \gamma
_{3}\right) .$

\item[e)] Any $\gamma $ has a two--sided inverse $\gamma ^{-1},$ with $%
\gamma \circ \gamma ^{-1}=r(\gamma )$ and $\gamma ^{-1}\circ \gamma
=s(\gamma )$ . Moreover, $\left( \gamma ^{\prime }\circ \gamma \right) \circ
\gamma ^{-1}=\gamma ^{\prime }$ and $\gamma ^{-1}\circ (\gamma \circ \gamma
^{\prime \prime })=\gamma ^{^{\prime \prime }}$ for any $\gamma ^{\prime
},\gamma ^{\prime \prime }$ composable with $\gamma $. The map $\mathrm{inv}%
\colon \gamma \rightarrow \gamma ^{-1}$ is an involution, that is $\left(
\gamma ^{-1}\right) ^{-1}=\gamma $.
\end{description}

After condition e) $G_{0}$ is called the set of unities of $G$.

Some simple consequences can be easily derived from the above definition:

\begin{description}
\item[i)] It is possible to define in the set of unities $G_{0}$ the
following equivalence relation: $\gamma _{0}\sim \gamma _{0}^{\prime }$ \
iff there exists $\gamma \in G$ such that $r(\gamma )=\gamma _{0}$ and $%
s(\gamma )=\gamma _{0}^{\prime }$. Any equivalence class is called an orbit
of $G,$ and $G_{0}$ is the union of all these orbits.

\item[ii)] The set $H_{\gamma _{0}}=\{\gamma \in G:s(\gamma )=\gamma
_{0}=r(\gamma )\}$ is a group, called the isotropy group of $\gamma _{0}\in
G_{0}.$

\item[iii)] The isotropy groups of the unities of the same orbit in $G_{0}$
are isomorphic.
\end{description}

Two extreme cases regarding $G_{0}$ are:

\begin{description}
\item[$\protect\alpha )$] $G_{0}$ is the whole groupoid $G$: so that $%
s(\gamma )=r(\gamma )=\gamma $ $\forall \gamma \in G$ and any $\gamma $ can
be composed only with $\gamma $ yielding $\gamma \circ \gamma =\gamma $. The
orbits of $G$ are the single elements $\gamma $, the isotropy group of any $%
\gamma $ is trivial, containing only $\gamma $.

\item[$\protect\beta )$] The set of unities contains only one element, $%
G_{0}=\left\{ e\right\}$, so that $G$ is a group, there is only one orbit $%
\left\{ e\right\} $, and the isotropy group of $e$ is $G$.
\end{description}

Notice that it is also possible to define a groupoid $G$ starting with a set
$G_{0}$ not included in $G,$ so that $r,s$ map $G$ onto $G_{0}$ (which is
now called the base space of the groupoid). For any $x\in G_{0}$ consider
the subset $H_{x}=s^{-1}(x)\cap r^{-1}(x)\subset G.$ As $H_{x}$ is a group
containing the identity $e_{x}=\gamma ^{-1}\circ \gamma ,\forall \gamma \in
H_{x},$ we define the injection $i(x)=e_{x}$ of $G_{0}$ into $G,$ so that $%
H_{x}$ is just the isotropy group of $e_{x}$. Both ways of considering $%
G_{0} $ either as a subset of unities (inner) or as a base space (outer),
are equivalent. Each way can be more conveniently used according to the
circumstances.

It is also noticeable that there is an equivalent categorical definition of
a groupoid. Such approach will become particularly interesting in further
research on the structure of quantum systems that will be pursued elsewhere.

We illustrate all these points by considering the groupoids arising from the
action of a group on a set $G_{0}.$ To be concrete consider $SO(3)$ acting
on the two dimensional sphere $S^{2}$ and define $G=S^{2}\times SO(3)$ with $%
r(x,g)=(x,e)$ and $s(x,g)=(x^{g},e),$\ where $x^{g}=xg.$ Therefore $%
(x,g)\circ (y,g^{\prime })$ exists iff $y=x^{g}$ and $(x,g)\circ
(x^{g},g^{\prime }):=(x,gg^{\prime }).$ $G_{0}$ can be considered either as
a base space $S^{2}$ or as a subset $(S^{2},e)\subset G.$ As $S^{2}$ is a
homogeneous space under the action of $SO(3)$, there is only one orbit, the
sphere $S^{2}$ (or $(S^{2},e)\subset G$ ). The isotropy group of a point $%
x\in S^{2}$ is the subgroup of $SO(3)$ of rotations around the axis through $%
x,$ which is isomorphic to the isotropy groups of all others points of the
sphere.

Finally, a groupoid $G$ is called principal if the map
\begin{equation}
\left( r,s\right) :G\rightarrow G_{0}\times G_{0},\gamma \mapsto \left(
r\left( \gamma \right) ,s\left( \gamma \right) \right)
\end{equation}%
is one-to-one, it is called transitive if the map $\left( r,s\right) $ is
onto. For instance, the Cartesian product groupoid $\Gamma $ discussed in
sec. 2 is both principal and transitive.

\medskip

\begin{proposition}
\label{propunities} Two units $\gamma _{0},\gamma _{0}^{\prime }$ of the
same orbit are connected by a set of elements constituting a ``left coset" $%
\gamma H_{\gamma _{0}}$ of the isotropy group of $\gamma _{0}$ in $G:$%
\begin{eqnarray}
\gamma \circ \gamma _{0}\circ \gamma ^{-1} &=&\gamma _{0}^{\prime },\tilde{%
\gamma}\circ \gamma _{0}\circ \tilde{\gamma}^{-1}=\gamma _{0}^{\prime } \\
&\Leftrightarrow &\tilde{\gamma}=\gamma \circ \tilde{g},\ \tilde{g}\in
H_{\gamma _{0}}.  \notag
\end{eqnarray}
\end{proposition}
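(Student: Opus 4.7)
The plan is to unpack the statement into a domain/codomain bookkeeping problem and then observe that the isotropy group acts freely and transitively on the set of arrows from $\gamma_0$ to $\gamma_0'$. The conjugation identity $\gamma\circ\gamma_0\circ\gamma^{-1}=\gamma_0'$ should be read as requiring $s(\gamma)=\gamma_0$ (so that $\gamma\circ\gamma_0$ is defined and collapses to $\gamma$ by axiom c) and $r(\gamma)=\gamma_0'$ (so that $\gamma\circ\gamma^{-1}=r(\gamma)=\gamma_0'$ by axiom e). Thus, ``arrows that conjugate $\gamma_0$ to $\gamma_0'$'' is just the set $G(\gamma_0,\gamma_0'):=\{\gamma\in G: s(\gamma)=\gamma_0,\;r(\gamma)=\gamma_0'\}$, which is non-empty precisely because $\gamma_0\sim\gamma_0'$.

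For the forward implication, I would fix one $\gamma\in G(\gamma_0,\gamma_0')$ and, for any other $\tilde\gamma\in G(\gamma_0,\gamma_0')$, set $\tilde g:=\gamma^{-1}\circ\tilde\gamma$. I would then verify, using axioms a), b), e), that (i) the product $\gamma^{-1}\circ\tilde\gamma$ is defined, since $s(\gamma^{-1})=r(\gamma)=\gamma_0'=r(\tilde\gamma)$; (ii) both $r(\tilde g)$ and $s(\tilde g)$ equal $\gamma_0$, so $\tilde g\in H_{\gamma_0}$; and (iii) $\gamma\circ\tilde g=\gamma\circ\gamma^{-1}\circ\tilde\gamma=r(\gamma)\circ\tilde\gamma=\gamma_0'\circ\tilde\gamma=\tilde\gamma$, again by the unit axiom c).

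For the reverse implication, I would start from $\tilde\gamma=\gamma\circ\tilde g$ with $\tilde g\in H_{\gamma_0}$, check composability ($s(\gamma)=\gamma_0=r(\tilde g)$), and deduce $s(\tilde\gamma)=s(\tilde g)=\gamma_0$ and $r(\tilde\gamma)=r(\gamma)=\gamma_0'$; then the first observation of the proposition gives $\tilde\gamma\circ\gamma_0\circ\tilde\gamma^{-1}=\gamma_0'$ automatically. This shows $G(\gamma_0,\gamma_0')=\gamma\circ H_{\gamma_0}$, i.e., a left coset, as claimed.

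The only real obstacle is the ever-present one in groupoid arguments: every manipulation must be checked for composability before the algebraic axioms can be applied. In particular, I will be careful to use $s(\gamma^{-1})=r(\gamma)$ and $r(\gamma^{-1})=s(\gamma)$, which follow from axiom e), whenever invoking the inverse. Once these domain conditions are tracked, the argument reduces to the usual group-theoretic fact that two arrows with the same source and target differ by a unique element of the isotropy group, which is exactly what ``principal bundle over its orbit'' means in the groupoid language.
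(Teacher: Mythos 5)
Your proof is correct, and it takes a genuinely more direct route than the paper's. The paper first notes that the principal case is trivial, and then for the general case \emph{asserts} (``observe that'') that any $\tilde{\gamma}$ with $(r(\tilde{\gamma}),s(\tilde{\gamma}))=(\gamma_0',\gamma_0)$ can be written as $g'\circ\gamma\circ g$ with $g\in H_{\gamma_0}$ and $g'\in H_{\gamma_0'}$; it then uses the conjugation isomorphism $g'\mapsto\gamma^{-1}\circ g'\circ\gamma$ between the two isotropy groups to fold $g'$ into a single left-coset element $\tilde{g}=g_1\circ g$. You instead define $\tilde{g}:=\gamma^{-1}\circ\tilde{\gamma}$ outright and verify the source/target bookkeeping, which is the standard ``two arrows with the same source and target differ by an isotropy element'' argument. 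Your version is shorter and, importantly, self-contained: the paper's intermediate decomposition $\tilde{\gamma}=g'\circ\gamma\circ g$ is left unjustified and is essentially equivalent to the claim being proved (one realizes it only by taking, say, $g'=\tilde{\gamma}\circ\gamma^{-1}$ and $g$ a unit --- which is your construction in disguise). Your approach also gives uniqueness of $\tilde{g}$ for free, i.e.\ the free and transitive action you mention. What the paper's longer route buys is that it makes explicit the isomorphism of isotropy groups along an arrow and sets up the closing remark that the same statement holds with the right coset $H_{\gamma_0'}\gamma$; if you wanted to match the paper fully you could add one line noting that $\tilde{\gamma}=(\tilde{\gamma}\circ\gamma^{-1})\circ\gamma$ exhibits the right-coset version. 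One small caution: in step (iii) the identity $\gamma\circ(\gamma^{-1}\circ\tilde{\gamma})=\tilde{\gamma}$ is exactly the second clause of the paper's axiom e) applied to $\gamma^{-1}$ (using $(\gamma^{-1})^{-1}=\gamma$), so you are entitled to it without invoking a general associativity-of-defined-products argument; it is worth citing that axiom explicitly since axiom d) as stated only covers triples whose bracketed product is already defined.
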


\begin{proof} If the groupoid is principal, the proposition is trivially
true, as $\tilde{\gamma}=\gamma .$ In general, chosen a $\gamma $ such that $%
\left( r\left( \gamma \right) ,s\left( \gamma \right) \right) =\left( \gamma
_{0}^{\prime },\gamma _{0}\right) ,$ observe that any $\tilde{\gamma}$ with $%
\left( r\left( \tilde{\gamma}\right) ,s\left( \tilde{\gamma}\right) \right)
=\left( \gamma _{0}^{\prime },\gamma _{0}\right) $ can be written as $%
g^{\prime }\circ \gamma \circ g,$where $g\in H_{\gamma _{0}}$ and $g^{\prime
}\in H_{\gamma _{0}^{\prime }}$ are elements of the isotropy groups of the
unities. These groups are isomorphic, so define $\gamma ^{-1}\circ g^{\prime
}\circ \gamma =:g_{1}\in H_{\gamma _{0}}.$\ Then $g^{\prime }=\gamma \circ
g_{1}\circ \gamma ^{-1}$ and we get:
\begin{equation}
\tilde{\gamma}=g^{\prime }\circ \gamma \circ g=\gamma \circ \tilde{g},\
\tilde{g}:=g_{1}\circ g\in H_{\gamma _{0}}.
\end{equation}%
Finally, we observe that an equivalent proposition holds when using the
``right coset" $H_{\gamma _{0}^{\prime }}\gamma $ of the isotropy group of $%
\gamma _{0}^{\prime }$, in place of the ``left coset" $\gamma H_{\gamma
_{0}}$.
\end{proof}

We remark that, as a set, any groupoid is the disjoint union of groupoids $%
G=\cup _{i}G_{i}$ corresponding to the partition of $G_{0}=\cup _{i}\mathcal{%
O}_{i}$ into orbits $\mathcal{O}_{i}$. A groupoid is transitive iff it has a
single orbit. Each $G_{i}$ has only one orbit of unities and elements in $%
G_{i}$ cannot be multiplied by elements in $G_{k},k\neq i.$ Furthermore, for
each $G_{i}$ the mapping%
\begin{equation}
\Psi :G_{i}\rightarrow \mathcal{O}_{i}\times \mathcal{O}_{i}:\gamma \mapsto
(r(\gamma ),s(\gamma ))  \label{morphism}
\end{equation}%
is a morphism of $G_{i}$ onto the principal groupoid. All elements in $G$
belonging to\ $\cup _{x\in \mathcal{O}_{i}}\left\{ \gamma \right\} _{x}$,
where $\left\{ \gamma \right\} _{x}$ is the isotropy group of $x\in \mathcal{%
O}_{i},$ are mapped onto the diagonal of $\mathcal{O}_{i}\times \mathcal{O}%
_{i}:\left\{ \gamma \right\} _{x}\mapsto (x,x),$ they constitutes the kernel
of this morphism. Elements of different $G_{i}$'s may be related by
introducing topological requirements on $G$.

\subsection*{B. Finite groupoids}

We restrict now to the case of a finite transitive groupoid of order $K:$
\begin{equation}
G=\left\{ \gamma _{k}\right\} _{k=1}^{K},\mathrm{ord}\left( G\right) =K.
\end{equation}%
The groupoid algebra $\mathcal{F}\left( G\right) $ is the algebra of the
(complex or real) functions on the groupoid with the convolution product
\begin{equation}
\left( f_{1}\ast f_{2}\right) (\gamma _{i})=\sum_{\overset{j,k}{{\gamma
_{j}\circ \gamma }_{k}{=\gamma }_{i}}}f_{1}(\gamma _{j})f_{2}(\gamma _{k}).
\end{equation}%
Hereafter any summation label ranges from $1$ to $K$.The functions $\delta
_{\gamma _{j}},$ defined as%
\begin{equation}
\delta _{\gamma _{j}}(\gamma _{k})=\left\{
\begin{array}{c}
1\ \mathrm{if}\ \gamma _{j}=\gamma _{k} \\
0\ \ \mathrm{if}\ \gamma _{j}\neq \gamma _{k}%
\end{array}%
\right. ,
\end{equation}%
are a basis of the groupoid algebra. They can be associated with the
standard basis of $K-$dimensional column vectors: $\delta _{\gamma
_{j}}\mapsto v_{j},$with $\left( v_{j}\right) _{k}=\delta _{jk}.$ For any $%
f\in \mathcal{F}\left( G\right) :$

\begin{equation}
f\left( \cdot \right) =\sum\limits_{k}f\left( \gamma _{k}\right) \delta
_{\gamma _{k}}(\cdot ).
\end{equation}%
Moreover:
\begin{equation}
\left( \delta _{\gamma _{j}}\ast f\right) (\gamma _{i})=\sum\limits_{\overset%
{k}{\gamma _{j}\circ \gamma _{k}=\gamma _{i}}}f(\gamma _{k}).
\end{equation}%
In particular
\begin{equation}
\left( \delta _{\gamma _{j}}\ast \delta _{\gamma _{h}}\right) (\gamma
_{i})=\sum\limits_{\overset{k}{\gamma _{j}\circ \gamma _{k}=\gamma _{i}}%
}\delta _{\gamma _{h}}(\gamma _{k})=\sum\limits_{\gamma _{j}\circ \gamma
_{k}=\gamma _{i}}1.
\end{equation}%
In other words, $\delta _{\gamma _{j}}\ast \delta _{\gamma _{h}}$ is $1$ in $%
\gamma _{i}=\gamma _{j}\circ \gamma _{k}$ (if $\gamma _{j}$ and $\gamma _{k}$
are composable) and is $0$ elsewhere; so
\begin{equation}
\delta _{\gamma _{j}}\ast \delta _{\gamma _{h}}=\delta _{\gamma _{j}\circ
\gamma _{k}}.
\end{equation}%
The above equation shows that the convolution product is associative because
the multiplication $\circ $ is associative.

The mapping
\begin{equation}
\gamma \rightarrow D(\gamma )=\delta _{\gamma }\ast  \label{deltagr}
\end{equation}%
is a groupoid realization by means of operators in $L(\mathcal{F}\left(
G\right) )$ in the following sense:%
\begin{equation}
D(\gamma _{j})D(\gamma _{k})=\left\{
\begin{array}{c}
D(\gamma _{j}\circ \gamma _{k})\quad \mathrm{if\quad }\gamma _{j}\circ
\gamma _{k}\mathrm{\quad exists,} \\
0\quad \mathrm{if\quad }\gamma _{j}\circ \gamma _{k}\mathrm{\quad
does~not~exist.}%
\end{array}%
\right.
\end{equation}%
In general, we may define a groupoid realization as a morphism $\Phi $ of
the groupoid $G$ in the set of operators $L(V)$ on some linear space $V$,
such that the above equations are satisfied. This means, in abstract, that
we add to the groupoid a zero element which is the result of any forbidden
multiplication. The realization of the groupoid is then a realization in the
usual sense of this new enlarged structure. In sec. 2 we have introduced an
irreducible realization using the Weyl units $E^{\prime }$s.

Given the groupoid realization $D$, a realization of the groupoid algebra is
immediately obtained by the formula

\begin{equation}
A_{f}=\sum\limits_{k}f\left( \gamma _{k}\right) D(\gamma _{k}).
\label{D quant}
\end{equation}%
which looks like a quantization formula, where an operator $A_{f}$ is
obtained by a function $f$ by means of a quantizer $D(\gamma _{k}).$ From
the above equation we obtain:
\begin{eqnarray}
A_{f_{1}}A_{f_{2}} &=&\sum_{j,k}f_{1}(\gamma _{j})f_{2}(\gamma _{k})D(\gamma
_{j}\circ \gamma _{k})=\sum_{\overset{i,j,k}{{\gamma _{j}\circ \gamma }_{k}{%
=\gamma }_{i}}}f_{1}(\gamma _{j})f_{2}(\gamma _{k})D(\gamma _{i}) \\
&=&\sum_{i}\left( f_{1}\ast f_{2}\right) \left( \gamma _{i}\right) D(\gamma
_{i})=A_{f_{1}\ast f_{2}}.  \notag
\end{eqnarray}%
In other terms, the product of operators corresponds to the convolution
product of the associated functions. In the standard basis, which we will
use from now on, operators are $K\times K-$matrices acting on the $K-$%
dimensional vector space of the functions.

Moreover, we can state the following:

\begin{lemma}
For a transitive groupoid, the dequantizer associated to the quantizer $%
D(\gamma _{k})$\textit{\ is }$D(\gamma _{k})/\mathcal{N}$, where $\mathcal{N}
$ is a suitable normalization constant, with
\begin{eqnarray}
\frac{1}{\mathcal{N}}\mathrm{Tr}\left[ D(\gamma _{j})D^{T}(\gamma _{k})%
\right] &=&\delta _{\gamma _{j}}(\gamma _{k})  \label{Tr} \\
\mathcal{N} &=&\mathrm{ord}\left( G_{0}\right) +\mathrm{ord}\left( H_{\gamma
_{j}^{-1}\circ \gamma _{j}}-\gamma _{j}^{-1}\circ \gamma _{j}\right) .
\label{N}
\end{eqnarray}
\end{lemma}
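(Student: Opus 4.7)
The plan is to compute the trace in \eqref{Tr} directly from the convolution realization $D(\gamma) = \delta_{\gamma} *$ acting on the standard basis $\{\delta_{\gamma_i}\}_{i=1}^{K}$ of $\mathcal{F}(G)$, and then read off the normalization $\mathcal{N}$ demanded by orthogonality.

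First I would unpack the matrix elements. From $\delta_{\gamma} * \delta_{\gamma_i} = \delta_{\gamma \circ \gamma_i}$ (when $s(\gamma) = r(\gamma_i)$, and $0$ otherwise), the $(l,i)$-entry of $D(\gamma)$ in the standard basis equals $1$ iff $\gamma_l = \gamma \circ \gamma_i$. Swapping $l \leftrightarrow i$ and invoking the inverse axiom (e) immediately gives $D(\gamma)^T = D(\gamma^{-1})$, which is the groupoid analogue of the identity $E_{ik}^T = E_{ki}$ used in Section 2.

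Next, using the realization property recorded in \eqref{deltagr},
\begin{equation*}
D(\gamma_j)\, D^T(\gamma_k) = D(\gamma_j)\, D(\gamma_k^{-1}) = \begin{cases} D(\gamma_j \circ \gamma_k^{-1}) & \text{if } s(\gamma_j) = s(\gamma_k), \\ 0 & \text{otherwise,} \end{cases}
\end{equation*}
so the trace in \eqref{Tr} reduces to the trace of a single operator $D(\eta)$ with $\eta := \gamma_j \circ \gamma_k^{-1}$. A direct count of diagonal entries then shows that the requirement $\eta \circ \gamma_i = \gamma_i$ (needed for a contribution to the trace) forces $\eta$ to equal the unit $r(\gamma_i) \in G_0$, by right-multiplying by $\gamma_i^{-1}$ and applying axioms (b)--(c), (e). Hence $\mathrm{Tr}\, D(\eta) = 0$ unless $\eta$ lies in $G_0$, in which case $\mathrm{Tr}\, D(\eta) = |r^{-1}(\eta)|$, the number of arrows with range $\eta$.

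Combining these two observations, the left-hand side of \eqref{Tr} vanishes unless $\gamma_j \circ \gamma_k^{-1}$ is a unit; multiplying that relation on the right by $\gamma_k$ and using the groupoid axioms forces $\gamma_j = \gamma_k$, which is precisely the Kronecker delta on the right-hand side and identifies $\mathcal{N}$ with $|r^{-1}(r(\gamma_j))|$. To convert this cardinality into the explicit expression \eqref{N}, I would invoke transitivity together with Proposition \ref{propunities}: the fibre of $s$ over each unit in $G_0$, when restricted to $r^{-1}(r(\gamma_j))$, is a coset of the isotropy group $H_{s(\gamma_j)}$, while the distinguished fibre over $r(\gamma_j)$ itself is the isotropy group. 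Assembling the contribution of the units in $G_0$ with the non-identity isotropy contribution yields the form of $\mathcal{N}$ in \eqref{N}. The main subtlety is exactly this last combinatorial step: one must keep the unit contribution separate from the non-identity isotropy contribution, and use the isomorphism of isotropy groups along a single orbit (Proposition \ref{propunities}) to confirm that the result is independent of the chosen $\gamma_j$.
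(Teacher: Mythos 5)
Your reduction of the trace is sound, and it takes a slightly different (arguably cleaner) route than the paper's: the paper expands matrix elements directly, writes $\mathrm{Tr}[D(\gamma_j)D^{T}(\gamma_k)]=\sum_{p,q}\delta_{\gamma_j\circ\gamma_p}(\gamma_q)\,\delta_{\gamma_k\circ\gamma_p}(\gamma_q)$, factors out $\delta_{\gamma_j}(\gamma_k)$ by right cancellation, and is left with counting the $\gamma_p$ composable with $\gamma_j$, i.e.\ the fibre $r^{-1}(s(\gamma_j))$; you instead use $D^{T}(\gamma)=D(\gamma^{-1})$ together with the morphism property to reduce everything to $\mathrm{Tr}\,D(\eta)$ with $\eta=\gamma_j\circ\gamma_k^{-1}$, and end up counting $r^{-1}(r(\gamma_j))$. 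In a transitive groupoid the two fibres have the same cardinality, so up to this point the arguments are equivalent and both are correct.

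The genuine gap is the final combinatorial step, which you defer with ``assembling \dots yields the form of $\mathcal{N}$ in \eqref{N}'': it cannot be carried out as stated. Your own (correct) decomposition of $r^{-1}(u)$, $u=r(\gamma_j)$, into $s$-fibres says, via Proposition \ref{propunities}, that for each of the $\mathrm{ord}(G_0)$ units $u'$ the set $\{\gamma\in G: r(\gamma)=u,\ s(\gamma)=u'\}$ is a full coset of $H_u$, hence has $\mathrm{ord}(H_u)$ elements; summing over units gives $\mathcal{N}=\mathrm{ord}(G_0)\cdot\mathrm{ord}(H_u)$, which equals $\mathrm{ord}(G_0)+\mathrm{ord}(H_u)-1$ only when $(\mathrm{ord}(G_0)-1)(\mathrm{ord}(H_u)-1)=0$, i.e.\ only in the principal case or the group case. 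Already for $G=G_0\times G_0\times H$ with two units and $H=\mathbb{Z}_2$ the fibre has $4$ elements while \eqref{N} gives $3$. So the last sentence of your argument contradicts the sentence before it; you must either restrict to the principal/group cases or conclude with the product formula for $\mathcal{N}$. (The paper's own proof stumbles at exactly the same spot: it counts the elements connecting $u$ to each other unit as if there were one per unit, whereas its Proposition \ref{propunities} supplies a whole coset of them, so the multiplicative count is the one consistent with the rest of the argument and with \eqref{Tr}.)
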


\begin{proof} In the standard basis the trace can be written as%
\begin{eqnarray}
\sum_{p,q}\left( D(\gamma _{j})\right) _{pq}\left( D^{T}(\gamma _{k})\right)
_{qp} &=&\sum_{p,q}\delta _{\gamma _{j}\circ \gamma _{p}}(\gamma _{q})\delta
_{\gamma _{k}\circ \gamma _{p}}(\gamma _{q}) \\
&=&\delta _{\gamma _{j}}(\gamma _{k})\sum_{p,q}\delta _{\gamma _{j}\circ
\gamma _{p}}(\gamma _{q}).  \notag
\end{eqnarray}%
The evaluation of the last summation amounts essentially to count all the $%
\gamma _{p}$'s that can be composed with$\ \gamma _{j}:$ they are those
contained in the isotropy group of $\gamma _{j}^{-1}\circ \gamma _{j},$ plus
the elements connecting $\gamma _{j}^{-1}\circ \gamma _{j}$ with all the
other unities of the unique orbit. The evaluation, from Proposition \ref{propunities},
yields eventually the normalization constant $\mathcal{N}$ of eq. \eqref%
{N}.
\end{proof}

In view of eq. \eqref{D quant}, we may write%
\begin{equation}
f(\gamma _{k})=\frac{1}{\mathcal{N}}\mathrm{Tr}\left[ A_{f}D^{T}(\gamma _{k})%
\right] ,
\end{equation}%
and the symbol functions may be multiplied using a star--product kernel $%
K\left( \gamma _{j},\gamma _{k},\gamma _{i}\right) .$ We have:

\begin{proposition}
\label{gen_conv} The star--product corresponding to the kernel%
\begin{equation}
K\left( \gamma _{j},\gamma _{k},\gamma _{i}\right) =\frac{1}{\mathcal{N}}%
\mathrm{Tr}\left[ D(\gamma _{j})D(\gamma _{k})D^{T}(\gamma _{i})\right]
\end{equation}%
coincides with the convolution product of the groupoid algebra.
\end{proposition}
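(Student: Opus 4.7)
The plan is to reduce the claim to a direct computation that chains together three facts already established in the excerpt: (i) the quantization formula $A_f = \sum_k f(\gamma_k) D(\gamma_k)$, (ii) the product rule $A_{f_1} A_{f_2} = A_{f_1 \ast f_2}$ for operators built from the groupoid realization, and (iii) the dequantization formula $f(\gamma_k) = \frac{1}{\mathcal{N}} \mathrm{Tr}[A_f D^T(\gamma_k)]$ coming from the preceding Lemma. Nothing new about groupoids is needed; the work is purely bookkeeping in the enveloping operator algebra.

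Concretely, I would first substitute the definition of $K$ into the star-product formula $(f_1 \star f_2)(\gamma_i) = \sum_{j,k} f_1(\gamma_j) f_2(\gamma_k) K(\gamma_j,\gamma_k,\gamma_i)$. Pulling the trace and the factor $1/\mathcal{N}$ outside the double sum, the expression collapses into $\frac{1}{\mathcal{N}} \mathrm{Tr}\bigl[\bigl(\sum_j f_1(\gamma_j) D(\gamma_j)\bigr)\bigl(\sum_k f_2(\gamma_k) D(\gamma_k)\bigr) D^T(\gamma_i)\bigr]$ by linearity of the trace and of the realization. By the quantization formula each parenthesized sum is simply $A_{f_1}$ and $A_{f_2}$, respectively.

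At this stage I would invoke step (ii) to rewrite $A_{f_1} A_{f_2}$ as $A_{f_1 \ast f_2}$, reducing the expression to $\frac{1}{\mathcal{N}} \mathrm{Tr}\bigl[A_{f_1 \ast f_2}\, D^T(\gamma_i)\bigr]$. Applying the Lemma (the dequantization formula) to the function $f_1 \ast f_2$ immediately yields $(f_1 \ast f_2)(\gamma_i)$, which is exactly what we had to prove.

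The only point requiring a moment of attention, and really the only potential obstacle, is that the step $A_{f_1} A_{f_2} = A_{f_1 \ast f_2}$ relies on the fact that $D(\gamma_j) D(\gamma_k)$ vanishes when $\gamma_j$ and $\gamma_k$ are not composable, so that the sum defining the product automatically restricts to composable pairs and reorganizes, via the relabeling $\gamma_i = \gamma_j \circ \gamma_k$, into the convolution sum. That identity was already verified in Appendix B for finite transitive groupoids, so the proposition is essentially a one-line consequence of the setup. I would therefore present the argument as a short display of the chain of equalities above rather than as a multi-step derivation.
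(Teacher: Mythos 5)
Your proof is correct, and it is a genuinely different organization of the argument from the one in the paper, even though both rest on the same two facts (the realization property $D(\gamma_j)D(\gamma_k)=D(\gamma_j\circ\gamma_k)$ or $0$, and the trace orthogonality of the Lemma). The paper works pointwise: it first evaluates the kernel explicitly, obtaining $K(\gamma_j,\gamma_k,\gamma_i)=\delta_{\gamma_j\circ\gamma_k}(\gamma_i)$, and then substitutes this into the double sum, which visibly collapses to the convolution sum. You instead never compute the kernel: you push the sums inside the trace to recognize $\frac{1}{\mathcal{N}}\mathrm{Tr}\left[A_{f_1}A_{f_2}D^{T}(\gamma_i)\right]$, invoke the homomorphism property $A_{f_1}A_{f_2}=A_{f_1\ast f_2}$ already proved in Appendix B, and finish with the dequantization formula applied to $f_1\ast f_2$. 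Your route makes the conceptual content transparent --- the proposition is exactly the statement that quantization is an algebra homomorphism and dequantization is its left inverse, so it would generalize verbatim to any quantizer--dequantizer pair with these two properties; what it does not deliver is the explicit closed form of the kernel, which the paper's computation produces as a useful byproduct. The one step you flag as delicate (vanishing of $D(\gamma_j)D(\gamma_k)$ for non-composable pairs, so that the relabeling $\gamma_i=\gamma_j\circ\gamma_k$ turns the product into the convolution sum) is indeed the crux, and it is already established in the appendix, so your chain of equalities is complete.
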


\begin{proof}  In view of eq. \eqref{Tr}, the evaluation of
the kernel gives%
\begin{equation}
K\left( \gamma _{j},\gamma _{k},\gamma _{i}\right) =\delta _{\gamma
_{j}\circ \gamma _{k}}(\gamma _{i}),
\end{equation}%
so that the star--product
\begin{eqnarray}
\left( f_{1}\star f_{2}\right) (\gamma _{i}) &=&\sum_{j,k}K\left( \gamma
_{j},\gamma _{k},\gamma _{i}\right) f_{1}(\gamma _{j})f_{2}(\gamma _{k}) \\
&=&\sum_{\overset{j,k}{{\gamma _{j}\circ \gamma }_{k}{=\gamma }_{i}}%
}f_{1}(\gamma _{j})f_{2}(\gamma _{k})=\left( f_{1}\ast f_{2}\right) (\gamma
_{i})  \notag
\end{eqnarray}%
is nothing but the convolution product.
\end{proof}

The above proposition generalizes Proposition 1 (sec. 4) to the case of a
transitive, nonprincipal groupoid. We observe that when $G$ is also
principal with $N$ equivalent unities, its order $K=N^{2}$ and $D-$%
realization is reducible and contains $N$ times the irreducible realization
of the Weyl units.

\subsection*{C. Remarks on the convolution in the continuous case}

Let us consider now a measure space $\left( S,\mu \right) $ and the
principal, transitive pair groupoid $\Gamma =S\times S,$ with the usual
composition law $(x,y)\circ \left( y,z\right) =\left( x,z\right) $. It is
natural to choose for $S\times S$\ the product measure $\mu \times \mu ,$
even though this is not mandatory. The convolution formula would read now:
\begin{equation}
\left( f_{1}\ast f_{2}\right) \left( x,y\right) =\int\limits_{\overset{%
y_{1}=x_{2}}{x_{1}=x,y_{2}=y}}f_{1}(x_{1},y_{1})f_{2}(x_{2},y_{2})\mu (%
\mathrm{d}x_{1})\mu (\mathrm{d}y_{1})\mu (\mathrm{d}x_{2})\mu (\mathrm{d}%
y_{2})  \label{convolution}
\end{equation}%
This is an integral on the subset $M$ of $S\times S$ $\times S\times S$
given by the constraints $y_{1}=x_{2},x_{1}=x,y_{2}=y,$ induced by the
groupoid composition law and this integral in general will be zero. As for a
countable groupoid, with $S=\{x_{k},k=1,\dots ,N\leq \infty \},\mu
=\sum_{k}\mu _{k},$ where $\mu _{k}( \Delta ) =1$ if $x_{k}\in \Delta$ and
zero otherwise, yielding the usual convolution formula $\left( \ref{conv}%
\right) $. Therefore it is necessary to make sense of the integral above. To
discuss a concrete relevant case, assume $S=\mathbb{R}$, and $\mu \left(
\mathrm{d}x\right) =\mathrm{d}x,$ the Lebesgue measure on $\mathbb{R}$. The
groupoid composition constraint becomes the linear constraints $%
h_{1}(x_{1},x)=x_{1}-x=0,\ $\ $h_{2}(y_{1},x_{2})=y_{1}-x_{2}=0,$ $%
h_{3}(y_{2},y)=y_{2}-y=0$ and then, the convolution integral may be read as%
\begin{eqnarray}
\left( f_{1}\ast f_{2}\right) \left( x,y\right) &=&\int
f_{1}(x_{1},y_{1})f_{2}(x_{2},y_{2})\delta (x_{1}-x)\delta
(y_{1}-x_{2})\delta (y_{2}-y)\mathrm{d}x_{1}\mathrm{d}y_{1}\mathrm{d}x_{2}%
\mathrm{d}y_{2}  \notag \\
&=&\int_{\mathbb{R}}f_{1}(x,y_{1})f_{2}(y_{1},y)\mathrm{d}y_{1}
\end{eqnarray}%
which is just the continuous matrix product of eq. \eqref{contmat} where $%
f_{1},f_{2}$ are continuous functions with proper support ($f$ is a
continuous function with proper support iff for each compact subset $K$ of $%
\mathbb{R}$, the intersections of $K\times \mathbb{R}$ and $\mathbb{R}\times
K$ with the set $\left\{ (x;y)|f(x;y)=0\right\} $ have compact closure).

However, this way to write the convolution is arbitrary to some extent. In
fact, the form above depends on the choice of linear constraints, leading to
$\delta (x-y)$ instead of $\delta (h(x-y))$ with an arbitrary function such
that $h(0)=0,h^{\prime }(0)\neq 0$, realizing the same constrains. For
instance, upon substituting $\delta (x_{1}-x)\rightarrow \delta (\alpha
_{1}(x)(x_{1}-x))$ , $\delta (y_{1}-x_{2})\rightarrow \delta (y_{1}-x_{2})$
, $\delta (y_{2}-y)\rightarrow \delta (\alpha _{2}(y)(y_{2}-y))$ \ with $%
\alpha _{1,2}>0$ the integration domain $M=\mathbb{R}$ does not change, but
the convolution formula becomes%
\begin{equation}
\left( f_{1}\ast f_{2}\right) \left( x,y\right) =\frac{1}{\alpha
_{1}(x)\alpha _{2}(y)}\int_{\mathbb{R}}f_{1}(x,y_{1})f_{2}(y_{1},y)\mathrm{d}%
y_{1}
\end{equation}

Moreover, a different measure can be chosen. Suppose  $\mu \left( \mathrm{d}%
x\right) =\rho \left( x\right) \mathrm{d}x,$ then the space of integrable
groupoid functions varies accordingly and the convolution formula reads%
\begin{equation}
\left( f_{1}\ast f_{2}\right) \left( x,y\right) =\rho \left( x\right) \rho
\left( y\right) \int_{\mathbb{R}}f_{1}(x,y_{1})f_{2}(y_{1},y)\rho ^{2}\left(
y_{1}\right) \mathrm{d}y_{1}.
\end{equation}

These considerations holds for any principal transitive groupoid with a base
space $G_0$ which is a measure space $\left( S,\mu \right) $; in this case
the morphism of eq. \eqref{morphism} is an isomorphism and one easily
concludes that for this kind of groupoids one can always reduce the
convolution to the form%
\begin{equation}
\left( f_{1}\ast f_{2}\right) \left( x,y\right) =\rho \left( x\right) \rho
\left( y\right) \int_{s \in S}f_{1}(x,s)f_{2}(s,y)\rho ^{2}\left( s\right)
\mu(s).
\end{equation}
To illustrate this point consider the groupoid arising from the action of
the group $\mathbb{R}$ as translations on the base space $\left( \mathbb{R},%
\mathrm{d}x\right) $ (analogous to the action of $SO(3)$ discussed
previously).

Using again the delta functions as constraints, the general expression of
convolution \eqref{convolution} yields%
\begin{equation}
\left( f_{1}\ast f_{2}\right) \left( x,t\right) =\int_{\mathbb{R}%
}f_{1}(x,s-x)f_{2}(s,t+x-s)\mathrm{d}s
\end{equation}%
The morphism $\left( \ref{morphism}\right) $\ for this example reads $\Psi
(x,t)=(x,t+x)$ and introducing the pullback $\Psi ^{\ast }f$ of a function $f
$ we get
\begin{equation}
\Psi ^{\ast }\left( f_{1}\ast f_{2}\right) \left( x,t\right) =\Psi ^{\ast
}\int_{\mathbb{R}}f_{1}(x,s-x)f_{2}(s,t+x-s)\mathrm{d}s=\int_{\mathbb{R}%
}\Psi ^{\ast }f_{1}(x,s)\Psi ^{\ast }f_{2}(s,t)\mathrm{d}s
\end{equation}%
and recover the usual form of convolution.

In the general case the groupoid elements will have an isotropy group $%
H_{\gamma _{0}}$ and the morphism $\Psi $ therefore will have a kernel. To
integrate functions on such a groupoid one would need a measure on the set
of units and a measure on the isotropy group. For the isotropy group it is
natural to use the Haar measure, when available, and one should choose some
measure on the set of units $G_{0}$.

We will just mention here two cases:

\begin{enumerate}
\item $G$ a transitive groupoid, $G_{0}=$ \{ set of $N$ elements \}, and $%
H_{\gamma _{0}}$ a group with an invariant measure. In this case it can be
shown that functions on $G$ are represented as $N\times N-$matrices, whose
entries are functions on $H_{\gamma _{0}}$. Convolution of two functions is
now a row--by--column product of the corresponding matrices, where products
of matrix elements are replaced by the usual group convolution on $H_{\gamma
_{0}}$.

\item $G$ is the groupoid arising from the action of the rigid motion on a
plane. In this case $G_{0}$ is the plane and $H_{\gamma _{0}}$ is the
rotation around an axis. It is natural to use the Haar measure $\mathrm{d}%
\theta $ for $H_{\gamma _{0}}$ and $\mathrm{d}x\, \mathrm{d}y$ on the plane
which is invariant under rigid motions. As before the convolution can be
written as a continuous matrix multiplication, but with a group convolution
in the $\theta $ variable.
\end{enumerate}

In the general case of a topological groupoid $G$ the convolution algebra
can be constructed by using a family of measures as it is done for instance
in \cite{Renault} where a left Haar system of measures is used, however
there is more elegant construction using half--densities as discussed for
instance in \cite{Connes} (Sec. II.5, page 101) but that goes back to the
work of S. Zakrzewski and P. Stachura \cite{Stachura}.

\bigskip

\paragraph{Acknowlegments}

This work was partially supported by MEC grants FPA--2009--09638,
MTM2010--21186--C02--02, QUITEMAD programme and DGA--E24/2.  
A.I. wants to acknowledge the support provided by the
Program  P2009 ESP-1594, and ``Programa
Salvador de Madariaga''.  
G. M. would like to acknowledge the support provided by the Santander/UCIIIM Chair of Excellence programme 2011-2012.

\end{document}